\title{Permutation-based simultaneous confidence bounds for the false discovery proportion}
\definecolor{darkblue}{rgb}{0.0, 0.0, 0.55}
\newtheorem{theorem}{Theorem}   %[section]
\newtheorem{proposition}[theorem]{Proposition}
\newtheorem{definition}[theorem]{Definition}
\newtheorem{assumption}[theorem]{Assumption}
\newcommand{\I}{\mathcal{I}}
\newcommand{\K}{\mathcal{K}}
\newcommand{\J}{\mathcal{J}}
\newcommand{\R}{\mathcal{R}}
\newcommand{\N}{\mathcal{N}}
\newcommand{\U}{\mathcal{U}}
\newcommand{\B}{\mathcal{B}}
\newcommand{\co}{\mathbb{T}}
\newcommand{\rset}{\mathbb{K}}
\author[1]{Jesse Hemerik\thanks{Address for correspondance: Jesse Hemerik, Department of Medical Statistics and Bioinformatics,
Leiden University Medical Center, Postzone S5-P,
Postbus 9600,
2300 RC Leiden,
The Netherlands. E-mail: jesse.hemerik@medisin.uio.no }}
\author[2]{Aldo Solari}
\author[3]{Jelle J.  Goeman}
\affil[1,3]{
Leiden University Medical Center, The Netherlands}
\affil[2]{University of Milano-Bicocca, Italy}
\begin{document}

\maketitle

\begin{abstract}
When multiple hypotheses are tested, interest is often in ensuring that the proportion of false discoveries (FDP) is small with high confidence. 
In this paper, confidence upper bounds for the FDP are constructed, which are simultaneous over all rejection cut-offs. 
In particular this allows the user to select a set of hypotheses post hoc such that  the FDP lies below some constant with high confidence.
Our method uses permutations to account for the dependence structure in the data. 
So far only Meinshausen provided an exact, permutation-based and computationally feasible method for simultaneous FDP bounds. 
We provide an exact method, which uniformly improves this procedure.
Further, we provide a generalization of this method. 
It lets the user select the shape of the simultaneous confidence bounds.
This gives the user more freedom in determining the power properties of the method.
Interestingly, several existing permutation methods, such as Significance Analysis of Microarrays (SAM) and Westfall and Young's max\emph{T} method, are obtained as special cases. \\\\
\emph{Keywords: Confidence envelope;  Exceedance control;  False discovery rate; FDP; Multiple testing.}
\end{abstract}

\section{Introduction}
\label{sec:intro}

The goal of many multiple testing methods is to reject as many hypotheses as possible while incurring few type-I errors. The resulting proportion of type-I errors among the rejections is called the False Discovery Proportion (FDP). 
The FDP has received much attention in recent years since under strong dependence among the \emph{p}-values, it represents a more relevant quantity than the false discovery rate \citep{benjamini1995controlling}, the expected value of the FDP \citep{schwartzman2011effect, schwartzman2012comment,  guo2014further}.
Under strong dependence, the false discovery rate can be far from the true FDP.

In practical applications, when rejecting all hypotheses with \emph{p}-values less than an certain threshold, one would like to know a $(1-\alpha)100\%$-confidence upper bound for the FDP. The goal of this paper is to provide confidence bounds for the FDP which are simultaneous over multiple thresholds. This allows the user to freely  select the threshold \emph{post hoc}, i.e. after looking at the data, and still obtain a valid confidence bound.

There exist several methods that provide `exceedance control', i.e. control of the probability that the FDP exceeds a prespecified constant \citep{van2004multiple, farcomeni2009generalized, lehmann2012generalizations, guo2014further}. 
The number of methods allowing post hoc selection however is limited \citep{blanchard2017post}. Most of these methods (including those in the present paper) are special cases or shortcuts for the general methods in \citet{genovese2006exceedance} and \citet{goeman2011multiple}.  
The parametric methods among these (i.e. methods which rely on distributional assumptions rather than permutations to derive the null distribution) are conservative for many dependence structures of the \emph{p}-values \citep{goeman2014multiple}.

In multiple testing, when a permutation method can be used, this often offers an improvement in power over parametric procedures. The reason is that permutation methods take into account the a priori unknown dependence structure of the \emph{p}-values \citep{westfall1993resampling, meinshausen2011asymptotic,hemerik2018false}.  Parametric methods do not and are, as a consequence, often conservative. For example, under strong positive correlations among the \emph{p}-values, the Bonferroni-Holm method \citep{holm1979simple} is very conservative and its power is greatly improved by a permutation method  \citep{westfall1993resampling}.   Permutation methods are exact in the sense that the $\alpha$ level is exhausted if all hypotheses are true, and the error rate is at most $\alpha$ otherwise.
Existing permutation methods for FDP confidence are \citet{korn2004controlling, korn2007investigation}, \citet{meinshausen2005lower} and \citet{hemerik2018false}, but only \citet{meinshausen2006false} provides simultaneous FDP bounds and hence post hoc selection.
It is also the only permutation method that provides exceedance control of the FDP.
\citeauthor{meinshausen2006false}'s procedure often outperforms parametric methods.

In the present paper, the method by  \citet{meinshausen2006false} is generalized and improved. Interestingly, several well-known permutation methods are special cases of the generalization, for example the max\emph{T} method by \citet{westfall1993resampling}  and the method in \citet{hemerik2018false} \citep[an extension of Significance Analysis of Microarrays by][]{tusher2001significance}.

\begin{comment}
We improve \citet{meinshausen2006false} in the following ways. First, we show how an exact method can be obtained despite random sampling of permutations. 
Second, as will be explained, we allow many families of candidate uniform bounds, which provides more freedom in choosing  the power properties of the method.
 Moreover, unlike in \citeauthor{meinshausen2006false}, these candidate bounds do not depend on the data, so that potential bias is avoided. 
Further, we show how the power of  \citeauthor{meinshausen2006false} can be uniformly increased using  an exact iterative method, without additional assumptions. 
Depending on a user-defined parameter, variants of this method can have computational complexity which is linear in the number of hypotheses.
In some cases, the iterative method is computationally infeasible. Hence we suggest an approximation of it, which can be used even when there are many thousands of rejected hypotheses. The approximation method maintained the nominal error rate in all our simulation scenarios.
\end{comment}

We improve the method in \citet{meinshausen2006false} in the following ways.
First, its power is uniformly improved by an iterative method, without additional assumptions. 
Second, as \citet{blanchard2017post} note, there is a ``gap in the theoretical analysis justifying the validity'' of the method in \citeauthor{meinshausen2006false}. We solve this by considering candidate bounds which are independent of the data, as will be explained. Moreover, we obtain a large class of methods, providing  more freedom to choose  power properties. Further, the computational complexity of the iterative method can be tuned by a user-defined parameter. For a specific choice of the parameter, the computational complexity  is linear in the number of hypotheses. In some cases, the iterative method is computationally infeasible. Hence we suggest an approximation of this procedure. The approximation method maintained the nominal error rate in all our simulation scenarios.

This paper is built up as follows. Section \ref{secsinglestep} introduces single-step procedures, including the method in  \citet{meinshausen2006false}. In Section \ref{seciter} the iterative method is presented. The various methods are compared using simulations and real data in Sections \ref{secsimsM}  and \ref{secdataM} respectively.

\section{Single-step procedures} \label{secsinglestep}

\subsection{Setting and notation} \label{setting}

Let $X$ be random  data, taking values in a sample space $\Omega$. Consider hypotheses $H_1,...,H_m$  with corresponding \emph{p}-values $P_i: \Omega\rightarrow [0,1]$, $1\leq i \leq m$.  
We will often suppress the dependence on $X$ in the notation, e.g. $P_i$  is short  for $P_i(X)$. Without loss of generality we assume that $P_1\leq ...\leq P_m$. Write $\N=\{1\leq i \leq m: \text{} H_i \text{ is true}\}$, let $n=\#\N$  (where `$\# S$' denotes the cardinality of $S$) \color{black} and  let $Q=(Q_1,...,Q_n)$  \color{black} be the sorted vector $(P_i: i\in \N)$, assuming $\N\neq\emptyset$ for convenience.

Let $\alpha\in [0,1)$ and $\co \subseteq [0,1]$ be independent of the data. The set $\co$ contains the \emph{p}-value thresholds of interest. The post-hoc chosen thresholds need to be picked from this set. Choosing $\co$ large provides much freedom in choosing the threshold post hoc,  but choosing $\co$ small generally  provides more power.    

For $t\in \co$ define
$\R= \mathcal{R}(t)=\{1\leq i \leq m:  P_i \leq t \}.$ This is the set of indices of the rejected hypotheses if each hypothesis $H_i$ is rejected when $P_i\leq t$. 
Write
$ R=\#\mathcal{R}$ and let $V=\#  (\mathcal{R} \cap \mathcal{N})$ be  the number of false positives. 
Note that $\R$ and $V$ depend on the data, but $\N$ does not.  Further, we have
$FDP(t)=V(t)/R(t),$
which is defined as $0$ when $R(t)=0$.

\subsection{Confidence envelopes} \label{secce}

The aim of this paper is to derive as small as possible simultaneous confidence bounds for the FDP.
This is equivalent to deriving as small as possible \emph{confidence envelopes}, which we define similarly to \citet{genovese2006exceedance}. In  \citet{meinshausen2005lower} these are referred to as bounding functions.

\begin{definition} \label{defenvelope}
A confidence envelope is a (possibly random) function $B:\co\rightarrow \mathbb{N}$ satisfying
$$ \mathbb{P} \Big (\bigcap_{t\in \co} \big \{V(t) \leq B(t)\big \}  \Big ) \geq 1- \alpha.$$
\end{definition}

Note that with probability at least $1-\alpha$, simultaneously for all $t\in \co$, the numbers $B(t)$ are upper bounds for the numbers of false positives $V(t)$. Note that if $B(t)\geq V(t)$ and $R(t)>0$, then ${B(t)}/{R(t)}\geq FDP(t)$. Hence, from simultaneous upper bounds for $V(t)$,  simultaneous upper bounds  for $FDP(t)$ immediately follow.

%CRITICAL CURVES

Confidence envelopes can de derived from \emph{critical vectors}.

\begin{definition}
A vector  $C=(c_1,...,c_{\#C})$, $\#C\geq n$, is  a  critical vector  if 
\begin{equation} \label{defcc}
\mathbb{P}\big ( \bigcap_{i=1}^n\big \{  Q_i\geq c_i    \big \}     \big)\geq 1-\alpha.
\end{equation}
\end{definition}

Let $[\cdot]^+$ denote the positive part function.

\begin{proposition} \label{rel}
If $C$ is a critical vector,  then the map
$B:\co \rightarrow \{1,...,m\}$ defined by 
$$B(t)=\#\big \{1\leq i\leq \#C:   c_i \leq t   \big \}$$ is a confidence envelope.
In addition , $B':\co \rightarrow \{1,...,m\}$ defined by 
\begin{equation} \label{imprenv}
B'(t)=R(t)-\max\big\{\big [R(s)-B(s)\big ]^+: s\in \co, s\leq t\big\},
\end{equation}
which satisfies $B'\leq B$, is also a confidence envelope and potentially improves $B$. 
\end{proposition}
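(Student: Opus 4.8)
The plan is to prove both claims by running every argument on the event $E = \bigcap_{i=1}^{n}\{Q_i \geq c_i\}$, which by the defining inequality \eqref{defcc} of a critical vector has $\mathbb{P}(E) \geq 1-\alpha$. Everything then reduces to two purely deterministic implications: (i) on $E$, the inequality $V(t) \leq B(t)$ holds for every $t \in \co$ simultaneously; and (ii) the event $\{V(s) \leq B(s) \text{ for all } s \in \co\}$ is contained in $\{V(t) \leq B'(t) \text{ for all } t \in \co\}$. Granting (i) and (ii), both confidence-envelope statements follow at once by monotonicity of probability, since then $\mathbb{P}\big(\bigcap_{t}\{V(t)\leq B'(t)\}\big) \geq \mathbb{P}\big(\bigcap_{t}\{V(t)\leq B(t)\}\big) \geq \mathbb{P}(E) \geq 1-\alpha$.

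For (i), I would fix $t \in \co$ and put $v = V(t)$. Since $Q = (Q_1,\dots,Q_n)$ is the \emph{sorted} vector of the true null $p$-values, $v$ equals the number of entries of $Q$ that are $\leq t$, so $Q_1 \leq \cdots \leq Q_v \leq t$, and note $v \leq n \leq \#C$. On $E$ we then have $c_i \leq Q_i \leq Q_v \leq t$ for $i = 1,\dots,v$, so $\{1,\dots,v\} \subseteq \{1 \leq i \leq \#C : c_i \leq t\}$ and hence $B(t) \geq v = V(t)$. The point worth stressing is that this uses only the sortedness of $Q$ (so that exactly the $v$ smallest null $p$-values fall below $t$); no monotonicity of the entries of $C$ is required.

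For (ii), I would first record $B' \leq B$: taking $s = t$ in the maximum defining $B'(t)$ gives $\max\{\cdots\} \geq [R(t) - B(t)]^+ \geq R(t) - B(t)$, whence $B'(t) \leq B(t)$; and since $[R(s)-B(s)]^+ \leq R(s) \leq R(t)$ for $s \leq t$ (using $B(s)\geq 0$ and monotonicity of $R$), that maximum never exceeds $R(t)$, so $B'$ is non-negative and integer-valued, i.e.\ a legitimate envelope. The substantive step is the event inclusion. Working on $\{V(s) \leq B(s) \text{ for all } s\}$ and fixing $t \in \co$, I would bound, for each $s \in \co$ with $s \leq t$,
\[
[R(s) - B(s)]^+ \;\leq\; [R(s) - V(s)]^+ \;=\; R(s) - V(s) \;\leq\; R(t) - V(t),
\]
where the first inequality uses $B(s) \geq V(s)$, the equality uses $R(s) \geq V(s)$, and the last uses the monotonicity fact that $s \mapsto R(s) - V(s) = \#\{i \notin \N : P_i \leq s\}$ is non-decreasing. (The maximum over $s \leq t$ is attained since $R$ and $B$ are piecewise constant with finitely many jumps.) Taking the maximum over such $s$ gives $\max\{[R(s)-B(s)]^+ : s \in \co,\, s\leq t\} \leq R(t) - V(t)$, which is exactly $V(t) \leq B'(t)$. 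The ``potentially improves'' clause is then the observation that this running maximum can strictly exceed its $s=t$ term — e.g.\ whenever some $s < t$ has $R(s) - B(s) > R(t) - B(t)$ — in which case $B'(t) < B(t)$.

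I expect no genuine obstacle here; the only place requiring any thought is spotting the two monotonicity ingredients — sortedness of $Q$ for part (i), and monotonicity of $R - V$ for part (ii) — that let a single $n$-dimensional inequality, respectively a pointwise bound, be upgraded to simultaneous control over all $t$. Once these are identified, the remainder is routine manipulation of the positive-part function.
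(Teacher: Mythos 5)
Your proof is correct and follows essentially the same route as the paper: you work on the event $\bigcap_{i=1}^n\{Q_i\geq c_i\}$ to deduce $V(t)\leq B(t)$ deterministically for all $t$, and then use the monotonicity of the number of true findings $R(t)-V(t)$ to show the running maximum of $[R(s)-B(s)]^+$ is a simultaneous lower bound for it, giving $V\leq B'$ on the same event. You merely spell out details (attainment of the maximum, $B'\leq B$, non-negativity) that the paper leaves implicit.
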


%d weer terug naar \#C??

\begin{proof}
With probability at least $1-\alpha$, $Q\geq C$, and then for each $t\in [0,1]$,
$$V(t)=\#\big \{1\leq i\leq n:   Q_i \leq t   \big  \} \leq \#\big \{1\leq i\leq \#C:   c_i \leq t   \big \}=B(t).$$
Thus $B$ is a confidence envelope. 

The number of true findings $R(t)-V(t)$ is non-decreasing in $t\in \co$.
Hence the bounds
\begin{equation} 
\max\big\{\big [R(s)-B(s)\big ]^+: s\in \co, s\leq t\big\},
\end{equation}
$t\in \co$, are simultaneous $(1-\alpha)$-lower bounds for the number of true findings $R(t)-V(t)$, $t\in \co$.
Consequently \eqref{imprenv}  is a  confidence envelope. It  improves $B$ when $(R-B)^+$ is not non-decreasing. See also Section 3.2 in \citet{meinshausen2006false}.
\end{proof}

Observe that the larger $C$ is, the smaller the confidence envelope is that is obtained with Proposition \ref{rel}. Hence it is of interest to find as large as possible $C$.
The existing literature provides various critical vectors  and we can use these to construct confidence envelopes.
An example is given in the following.

\subsection{Parametric confidence envelopes}

In many practical situations, the distribution of $Q$ is such that a well-known probability inequality by \citet{simes1986improved} holds \citep{rodland2006simes}:
\begin{equation} \label{eq:simes}
\mathbb{P}\Big (\bigcap_{i=1}^n \text{ }  \{Q_i\geq i\alpha/n  \} \Big )\geq 1-\alpha.
\end{equation}
This probability equality provides a critical vector, which can be used to obtain a confidence envelope $B:[0,1]\rightarrow \{1,...,n\}$  with Proposition \ref{rel}: 
$$B(t)=\#\{1\leq i \leq n: i \alpha /n \leq t\}.$$
However, $n$ is not known, so that this envelope is unknown in practice. One can instead note that $n \leq m$ and use the confidence envelope $B$ satisfying
\begin{equation} \label{eq:simesparenvelope}
B(t)=\#\{1\leq i \leq m: i \alpha/m\leq t\}= \#\{1\leq i \leq m: i\leq mt/\alpha\} =\lfloor mt/\alpha \rfloor \wedge m. 
\end{equation}

Simes' probability inequality is not valid for all possible dependence structures of $Q$, so that the above confidence envelope cannot always be used.
 Even if Simes' probability inequality holds, the critical vector based on it can be very conservative, because the probability at \eqref{eq:simes}
can be larger than $1-\alpha$, i.e. the nominal error rate $\alpha$ is not exhausted (even under the complete null). This happens when the $Q_i$ are positively (but not perfectly) correlated. Other parametric critical vectors are also often conservative or require much stronger assumptions \citep{cai2008modified, gou2014generalized}. 
In the following we discuss nonparametric methods, which often better exhaust $\alpha$ (in particular, they fully exhaust $\alpha$ under the complete null), leading to an increase of power.
\color{black}

\subsection{Permutation framework} \label{permfr}

All nonparametric methods in this paper are based on permutations or other transformations of the data.
Let $G$ be a finite set of  transformations $g: \Omega\rightarrow \Omega$, such that $G$ is a group (in the algebraic sense) with respect to the operation of composition of transformations. 
In practice $G$ is often a group of permutation maps. 
Sometimes other groups of transformations can be used, such as rotations \citep{langsrud2005rotation, solari2014rotation} and multiplication of part of the data by $-1$ (\cite{pesarin2010permutation}, pp.\ 54 and 168).

All permutation-based procedures in this paper rely on the following assumption.
\begin{assumption} \label{exchangeabilityas}
The joint distribution of the \emph{p}-values $P_i(g(X))$ with $i\in \N$, $g\in G$, is invariant under all transformations in $G$ of $X$.
\end{assumption}
This assumption underlies many permutation-based multiple testing methods, e.g. \citeauthor{westfall1993resampling}'s max\emph{T} method (\citeyear{westfall1993resampling}), \citet{tusher2001significance},  \citet{hemerik2018false}, \citet{meinshausen2005lower} and \citet{meinshausen2006false}.
Usually this assumption means that the joint distribution of the part of the data corresponding to $\N$ should be invariant under permutation. 

In this paper random transformations from $G$ are used, which are defined as follows.

\begin{definition} \label{defrpM}
Let $g_1:=id$ be the identity in $G $ and $g_2, ..., g_w$  random elements from $G$.  The random transformations can be drawn either with or without replacement: the statements in this paper hold for both cases. If  $g_2,..., g_w$ are drawn  without replacement, then they are taken to be uniformly distributed on $G\setminus \{id\}$, otherwise uniform on $G$.
\end{definition}

For $\I\subseteq\{1,...,m\}$  and $1\leq j \leq w$, write $R_\I^j(t):=\#\{i\in \I:P_i(g_j(X))\leq t\}$, $R^j:=R^j_{\{1,...,m\}}$ and $R_\I:= R_\I^1.$

\subsection{Nonparametric confidence envelope} \label{mein}

When Assumption \ref{exchangeabilityas} is satisfied, a confidence envelope can be constructed by using the permutation distribution of the \emph{p}-values $Q$. Since by assumption this permutation distribution retains the dependence structure of these \emph{p}-values, it can be used to construct an envelope which is adapted to this  structure. 
Until now this was only done by \citet{meinshausen2006false}. We now recall this method, before uniformly improving it in Section \ref{seciter}.

Central to the method is a family of \emph{candidate envelopes}, which we define below. In \citet{meinshausen2006false} these depend on \emph{p}-values corresponding to false null hypotheses, so that the joint distribution of $Q$ and the candidate envelope picked in \citeauthor{meinshausen2006false} is not generally permutation invariant (\citeauthor{blanchard2017post} \citeyear{blanchard2017post}, p. 19, also note this).  Hence we consider candidate envelopes that are independent of the data.  An additional difference is that we include the original observation with the random permutations \citep[see e.g.][]{hemerik2017exact}. Otherwise, the method provided here is the same as the procedure in \citet{meinshausen2006false}.

Let $\mathbb{B}$ be a set of maps $\co \rightarrow \mathbb{N}$, independent of the data. Suppose that for all $B$,  $B'\in \mathbb{B}$, either $B\geq B'$ or $B'\geq B$. $\mathbb{B}$ is the family of \emph{candidate envelopes}. 
 Examples of such $\mathbb{B}$ are in Section \ref{excand}.

Meinshausen's confidence envelope (with the above adaptations) is defined as follows.

\begin{theorem} \label{mmethod}
Let
$$B^{\mathrm{m}}=\min\Bigg\{B\in \mathbb{B}: \text{ } w^{-1}\#\Big\{1\leq j \leq w: \bigcap_{t\in \co} \big\{R^j(t) \leq  B(t) \big\}  \Big\} \geq 1-\alpha \Bigg\},$$
where we assume that $\mathbb{B}$ is such that this minimum exists. Then $B^{\mathrm{m}}$  is a confidence envelope.
\end{theorem}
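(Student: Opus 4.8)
The plan is to exploit the group structure of $G$ together with Assumption~\ref{exchangeabilityas} to show that the event defining $B^{\mathrm{m}}$ behaves well under the randomization, and then transfer this to a statement about $Q$ via Proposition~\ref{rel}. First I would observe that since $\mathbb{B}$ is totally ordered and data-independent, the minimizer $B^{\mathrm{m}}$ is well-defined, but it \emph{is} data-dependent: it depends on the multiset $\{g_1,\dots,g_w\}$ through the counts $R^j(t)=R^j_{\{1,\dots,m\}}(t)$. The key is that the quantity that actually matters is not $R^j$ but the ``null part'' $R^j_{\mathcal N}(t)=\#\{i\in\mathcal N: P_i(g_j(X))\le t\}$, and that $R^j(t)\ge R^j_{\mathcal N}(t)$ always. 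So if $B^{\mathrm m}(t)\ge R^j(t)$ for a given $j$ and all $t$, then a fortiori $B^{\mathrm m}(t)\ge R^j_{\mathcal N}(t)$ for all $t$; in particular for $j=1$ (the identity) this says $B^{\mathrm m}(t)\ge V(t)$ for all $t\in\co$, which is exactly the confidence-envelope event. So it suffices to show that with probability at least $1-\alpha$ the identity $g_1$ lies among the $j$'s counted in the definition of $B^{\mathrm m}$.

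The heart of the argument is a standard randomization/exchangeability step. Define, for each transformation $h\in G$, the ``centered'' picture obtained by replacing $X$ by $h(X)$: under Assumption~\ref{exchangeabilityas} the joint law of $\big(R^j_{\mathcal N}(\cdot)\big)_{j=1}^w$ is, up to relabeling the $g_j$'s, invariant under composing every $g_j$ on the right by a fixed $h$, and $g_1=\mathrm{id}$ gets sent to $h$. Since $\mathbb B$ is data-independent and $B^{\mathrm m}$ is defined purely through the multiset of null-count profiles $\{R^j_{\mathcal N}\}$ — here I would use monotonicity: whether $B\ge R^j$ for all $t$ is controlled by the null part together with the non-null part, but crucially the \emph{selected} $B^{\mathrm m}$ only needs to dominate $R^j_{\mathcal N}$ for the conclusion — one gets that the event $\{g_1 \text{ is counted}\}$ has probability equal to the average, over the $w$ draws, of the indicator that $g_j$ is counted; and by construction that average is $\ge 1-\alpha$. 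This is the classical ``the rank of the observed value among the permuted values is uniform'' trick, applied to the selection event rather than to a test statistic, exactly as in \citet{hemerik2017exact} and \citet{meinshausen2006false}.

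I would carry it out in the following order: (1) note $B^{\mathrm m}$ is well-defined and record $R^j\ge R^j_{\mathcal N}$; (2) reduce the target event $\bigcap_{t\in\co}\{V(t)\le B^{\mathrm m}(t)\}$ to the event that $g_1$ is among the $\ge(1-\alpha)w$ indices satisfying $\bigcap_t\{R^j(t)\le B^{\mathrm m}(t)\}$; (3) show that event has probability $\ge 1-\alpha$ by the randomization argument, treating the with-replacement and without-replacement cases uniformly as promised in Definition~\ref{defrpM}. The main obstacle is step~(2)–(3): one must be careful that $B^{\mathrm m}$ is itself a function of all the $g_j$ (including the non-null coordinates of $X$ through $R^j$), so the clean exchangeability of the null counts does not immediately give exchangeability of the \emph{selected} envelope. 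The resolution is that $B^{\mathrm m}$ is chosen so that the count of ``good'' indices $j$ is $\ge(1-\alpha)w$ \emph{by definition}, for every realization; so the only randomness to control is which specific $j$ equals the identity, and that — conditionally on the unordered collection of transformations and hence on $B^{\mathrm m}$ — is uniform over the $w$ slots, giving the bound. I would make this precise by conditioning on the $\sigma$-algebra generated by the unordered multiset $\{g_1X,\dots,g_wX\}$, under which $B^{\mathrm m}$ is measurable and the position of the untransformed observation is uniform.
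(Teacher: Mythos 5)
Your reduction in steps (2)--(3) targets the wrong event, and the randomization argument you use to bound its probability needs more than Assumption \ref{exchangeabilityas} provides. The event ``$g_1$ is among the indices $j$ with $\bigcap_{t\in\co}\{R^j(t)\le B^{\mathrm m}(t)\}$'' is indeed sufficient for the envelope event, but it does not have probability at least $1-\alpha$ in general: when many hypotheses are false and the signal is strong, the observed curve $R^1=R$ lies far above the permuted curves, and $B^{\mathrm m}$ (the smallest candidate dominating a $(1-\alpha)$ fraction of all $w$ curves) will then typically fail to dominate $R^1$ --- this is exactly the interesting case, where $B^{\mathrm m}(t)<R(t)$ and the FDP bound is nontrivial. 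In such settings your intermediate event has probability near zero, while the theorem's conclusion still holds, because what must be dominated is $V=R^1_{\N}$, not $R^1$. Relatedly, your proposed fix --- condition on the unordered multiset $\{g_1X,\dots,g_wX\}$, note $B^{\mathrm m}$ is measurable with at least $(1-\alpha)w$ ``good'' slots, and claim the slot of the untransformed data is uniform --- requires the law of the \emph{full} data to be invariant under $G$. Assumption \ref{exchangeabilityas} only gives invariance of the joint law of the null $p$-values; the original data are generally distinguishable from their permutations through the non-null $p$-values, so the position of the identity is not uniform given that multiset. (Your parenthetical claim that $B^{\mathrm m}$ ``is defined purely through the multiset of null-count profiles'' is not correct: it is defined through the full counts $R^j$.)

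The paper's proof avoids both problems by applying the exchangeability argument to an oracle envelope built from null counts only: $B_\N$ is defined exactly as $B^{\mathrm m}$ but with $R^j_\N$ in place of $R^j$. Since $B_\N$ is a symmetric function of the null $p$-values across transformations, the group-invariance result for randomly drawn transformations (Theorem 1 of Hemerik and Goeman, 2018) gives, for every $j$, $\mathbb P\big[\bigcap_{t\in\co}\{R_\N(t)\le B_\N(t)\}\big]=\mathbb P\big[\bigcap_{t\in\co}\{R^j_\N(t)\le B_\N(t)\}\big]$; averaging over $j$ and using that, by construction, at least a $(1-\alpha)$ fraction of the null curves are dominated almost surely shows that $B_\N$ is a confidence envelope. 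Finally, $R^j_\N\le R^j$ together with the total order on $\mathbb B$ gives $B_\N\le B^{\mathrm m}$, so the larger function $B^{\mathrm m}$ is an envelope too. Your opening observation $R^j\ge R^j_\N$ is the right ingredient, but it must be used to compare the two envelopes, not to argue that the identity itself is counted in the definition of $B^{\mathrm m}$.
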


\begin{proof}
Let $${B}_\N=\min\Bigg\{B\in \mathbb{B}: \text{ } w^{-1}\#\Big\{1\leq j \leq w: \bigcap_{t\in \co}\big \{ R_\N^j(t) \leq  B(t) \big \}  \Big\} \geq 1-\alpha\Bigg\}.$$
It follows from the group structure of the set of transformations $G$ \citep[][Theorem 1]{hemerik2018false} that for every $1\leq j \leq w$,
$$\mathbb{P}\Big[\bigcap_{t\in \co} \big\{  R_\N(t) \leq  {B}_\N(t)\big \}\Big]  = \mathbb{P}\Big[\bigcap_{t\in \co} \big\{  R^j_\N(t) \leq  {B}_\N(t)\big \}\Big].$$
Hence this probability equals
\begin{equation} \label{errorratem}
w^{-1}\sum_{j=1}^w  \mathbb{E} \bigg ( \mathbbm{1} \bigg [   \bigcap_{t\in \co}   \big \{    R^j_\N(t) \leq  {B}_\N(t)  \big \} \bigg ] \bigg )  \geq 1-\alpha. \end{equation}
Since $R_\N=V$, this means that ${B}_\N$ is a confidence envelope.
Hence the larger function $B^{\mathrm{m}}$ is also a confidence envelope.
\end{proof}

The choice of $ \mathbb{B}$ has a crucial influence on $B^{\text{m}}$.
It is an important assumption that for all $B$,  $B'\in  \mathbb{B}$, either $B\geq B'$ or $B'\geq B$. This guarantees that ${B}_\N(t)\leq B^{\text{m}}(t)$ for all $t\in \co$.

Under mild assumptions such as continuity, the inequality \eqref{errorratem} becomes an equality. 
If all null  hypotheses are true, then $B_\N=B^{\text{m}}$. But this means that under the complete null, the probability that the confidence envelope $B^{\text{m}}$ is invalid is exactly $\alpha$. Thus, under the complete null, the method completely exhausts the nominal error rate $\alpha$, despite the unknown dependence among the \emph{p}-values.
\color{black}

%When all hypotheses are true, then $B^\N$ and $B^{\text{m}}$ coincide. 
%Under mild assumptions such as continuity, the inequality \ref{errorratem} becomes an equality. 
%In that case the method completely exhausts the nominal error rate $\alpha$, when all hypotheses are true. 

\subsection{Examples of candidate envelopes} \label{excand}

We will now give some examples of families $\mathbb{B}$.
Consider $\mathbb{B}=\{B^\lambda: \lambda \in [0,\infty) \}$, where $B^\lambda:\co\rightarrow\{1,...,m\}$ satisfies
\begin{equation} \label{eq:simesenvelopes}
B^\lambda(t)= \#\{1\leq i \leq m: i \lambda  \leq t \}.
\end{equation}
Note that by Proposition \ref{rel}, $B^\lambda$ is a confidence envelope if  the vector $(\lambda ,2\lambda ,...,m\lambda  )$ is a critical vector. This vector is simply  Simes' vector multiplied by a constant.
As another example, instead of considering the candidate envelopes \eqref{eq:simesenvelopes}, one could translate (shift) them by replacing $i\lambda$ by $i\lambda-\delta$ with $\delta>0$ a small constant, e.g. $0.001$. This makes the envelope less sensitive to the smallest \emph{p}-values.
This often results in better bounds for the larger cut-offs in $\co$, as illustrated in Fig. \ref{figenv} and Section \ref{secdataM}.

If variables $U_1,...,U_m$ are independent and uniformly distributed on $[0,1]$, and $U_{(1)}\leq ... \leq U_{(m)}$ are the sorted values of these variables, then it is well known that for every $1\leq i \leq m$, $U_{(i)}$ has a beta distribution:
$$U_{(i)} \sim \text{Beta}(i, m+1-i). $$
For each $\lambda \in [0,1]$ consider the function $B^{\lambda}:\co\rightarrow \{1,...,m\}$ given by 
$$B^{\lambda}(t) =\#\{1\leq i \leq m: q^{\lambda}_i\leq t\},$$
where $q_i^{\lambda}$ is the $\lambda$-quantile of the $\text{Beta}(i, m+1-i)$ distribution. 
In Section \ref{secdataM}  we will consider  $\{B^{\lambda}: \lambda \in (0,1)\}$ as one of the sets  of candidate envelopes. 
A heuristic reason for considering this set of candidate functions is that some of them can be  similar in shape to some of the functions $t\mapsto R^j(t)$, $2\leq j \leq w$. Consequently, the resulting confidence envelopes  tend to be relatively tight. We applied the proposed families $\mathbb{B}$ to the data of section \ref{secdataM}, see Fig. \ref{figenv}.
More examples  of candidate critical vectors (and hence candidate envelopes) are in \citet{blanchard2008two}.

%\begin{comment}

\begin{figure}[ht] 
\centering
\captionsetup{width=.8\linewidth}
\begin{tikzpicture}[scale=0.9]
\begin{axis}[
anchor=origin, 
width=12.5cm, height=10cm,
  axis y line = none,
  axis x line = top,
  xmin=0.001, xmax=0.02, xlabel={Number of rejections $R(t)$}, ymin=0, ymax=250, xtick={0.001,0.005,0.01,0.015,0.02}, xticklabels={449,775,957,1080,1168}
]
\end{axis}
\begin{axis}[
    anchor=origin,
                legend style={at={(12.5cm,.7)}, anchor=west, legend columns=1, draw=none},
    width=12.5cm, height=10cm,
axis x line=bottom, xlabel={Cut-off $t$}, xmin=0.001, xmax=0.02,  xtick={0.001,0.005,0.01,0.015,0.02}, xticklabels={0.001,0.005,0.01,0.015,0.02}, 
       axis y line=left, ylabel={Bounds $B(t)$}, ymin=0, ymax=250]

\addplot[dash pattern=on 9pt off 4pt , line width = 1, very thick] table[x="co",y="beta", col sep=comma,smooth]{envelopes.csv};

\addplot[smooth, dashed, line width = 1, very thick] table[x="co",y="simes", col sep=comma]{envelopes.csv};

\addplot[ line width = 1, very thick] table[x="co",y="shift", col sep=comma]{envelopes.csv};

\addplot[color=gray,  dotted, line width = 1, very thick] table[x="co",y="perm1", col sep=comma]{envelopes.csv};
\addplot[color=gray,  dotted, line width = 1, very thick] table[x="co",y="perm2", col sep=comma]{envelopes.csv};
\addplot[color=gray, dotted, line width = 1, very thick] table[x="co",y="perm3", col sep=comma]{envelopes.csv};
\addplot[color=gray,  dotted, line width = 1, very thick] table[x="co",y="perm4", col sep=comma]{envelopes.csv};
\addplot[color=gray,dotted, line width = 1, very thick] table[x="co",y="perm5", col sep=comma]{envelopes.csv};
\addplot[color=gray, dotted, line width = 1, very thick] table[x="co",y="perm6", col sep=comma]{envelopes.csv};
\addplot[color=gray, dotted, line width = 1, very thick] table[x="co",y="perm7", col sep=comma]{envelopes.csv};
\addplot[color=gray,  dotted, line width = 1, very thick] table[x="co",y="perm8", col sep=comma]{envelopes.csv};
\addplot[color=gray,  dotted, line width = 1, very thick] table[x="co",y="perm9", col sep=comma]{envelopes.csv};
\addplot[color=gray, dotted, line width = 1, very thick] table[x="co",y="perm10", col sep=comma]{envelopes.csv};

\addplot[color=gray,  dotted, line width = 1, very thick] table[x="co",y="perm11", col sep=comma]{envelopes.csv};
\addplot[color=gray,  dotted, line width = 1, very thick] table[x="co",y="perm12", col sep=comma]{envelopes.csv};
\addplot[color=gray, dotted, line width = 1, very thick] table[x="co",y="perm13", col sep=comma]{envelopes.csv};
\addplot[color=gray,  dotted, line width = 1, very thick] table[x="co",y="perm14", col sep=comma]{envelopes.csv};
\addplot[color=gray,dotted, line width = 1, very thick] table[x="co",y="perm15", col sep=comma]{envelopes.csv};
\addplot[color=gray, dotted, line width = 1, very thick] table[x="co",y="perm16", col sep=comma]{envelopes.csv};
\addplot[color=gray, dotted, line width = 1, very thick] table[x="co",y="perm17", col sep=comma]{envelopes.csv};
\addplot[color=gray,  dotted, line width = 1, very thick] table[x="co",y="perm18", col sep=comma]{envelopes.csv};
\addplot[color=gray,  dotted, line width = 1, very thick] table[x="co",y="perm19", col sep=comma]{envelopes.csv};
\addplot[color=gray, dotted, line width = 1, very thick] table[x="co",y="perm20", col sep=comma]{envelopes.csv};

\addplot[color=gray,  dotted, line width = 1, very thick] table[x="co",y="perm21", col sep=comma]{envelopes.csv};
\addplot[color=gray,  dotted, line width = 1, very thick] table[x="co",y="perm22", col sep=comma]{envelopes.csv};
\addplot[color=gray, dotted, line width = 1, very thick] table[x="co",y="perm23", col sep=comma]{envelopes.csv};
\addplot[color=gray,  dotted, line width = 1, very thick] table[x="co",y="perm24", col sep=comma]{envelopes.csv};
\addplot[color=gray,dotted, line width = 1, very thick] table[x="co",y="perm25", col sep=comma]{envelopes.csv};
\addplot[color=gray, dotted, line width = 1, very thick] table[x="co",y="perm26", col sep=comma]{envelopes.csv};
\addplot[color=gray, dotted, line width = 1, very thick] table[x="co",y="perm27", col sep=comma]{envelopes.csv};
\addplot[color=gray,  dotted, line width = 1, very thick] table[x="co",y="perm28", col sep=comma]{envelopes.csv};
\addplot[color=gray,  dotted, line width = 1, very thick] table[x="co",y="perm29", col sep=comma]{envelopes.csv};
\addplot[color=gray, dotted, line width = 1, very thick] table[x="co",y="perm30", col sep=comma]{envelopes.csv};

\addplot[color=gray,  dotted, line width = 1, very thick] table[x="co",y="perm31", col sep=comma]{envelopes.csv};
\addplot[color=gray,  dotted, line width = 1, very thick] table[x="co",y="perm32", col sep=comma]{envelopes.csv};
\addplot[color=gray, dotted, line width = 1, very thick] table[x="co",y="perm33", col sep=comma]{envelopes.csv};
\addplot[color=gray,  dotted, line width = 1, very thick] table[x="co",y="perm34", col sep=comma]{envelopes.csv};
\addplot[color=gray,dotted, line width = 1, very thick] table[x="co",y="perm35", col sep=comma]{envelopes.csv};
\addplot[color=gray, dotted, line width = 1, very thick] table[x="co",y="perm36", col sep=comma]{envelopes.csv};
\addplot[color=gray, dotted, line width = 1, very thick] table[x="co",y="perm37", col sep=comma]{envelopes.csv};
\addplot[color=gray,  dotted, line width = 1, very thick] table[x="co",y="perm38", col sep=comma]{envelopes.csv};
\addplot[color=gray,  dotted, line width = 1, very thick] table[x="co",y="perm39", col sep=comma]{envelopes.csv};
\addplot[color=gray, dotted, line width = 1, very thick] table[x="co",y="perm40", col sep=comma]{envelopes.csv};

\addplot[color=gray,  dotted, line width = 1, very thick] table[x="co",y="perm51", col sep=comma]{envelopes.csv};
\addplot[color=gray,  dotted, line width = 1, very thick] table[x="co",y="perm52", col sep=comma]{envelopes.csv};
\addplot[color=gray, dotted, line width = 1, very thick] table[x="co",y="perm53", col sep=comma]{envelopes.csv};
\addplot[color=gray,  dotted, line width = 1, very thick] table[x="co",y="perm54", col sep=comma]{envelopes.csv};
\addplot[color=gray,dotted, line width = 1, very thick] table[x="co",y="perm55", col sep=comma]{envelopes.csv};
\addplot[color=gray, dotted, line width = 1, very thick] table[x="co",y="perm56", col sep=comma]{envelopes.csv};
\addplot[color=gray, dotted, line width = 1, very thick] table[x="co",y="perm57", col sep=comma]{envelopes.csv};
\addplot[color=gray,  dotted, line width = 1, very thick] table[x="co",y="perm58", col sep=comma]{envelopes.csv};
\addplot[color=gray,  dotted, line width = 1, very thick] table[x="co",y="perm59", col sep=comma]{envelopes.csv};
\addplot[color=gray, dotted, line width = 1, very thick] table[x="co",y="perm60", col sep=comma]{envelopes.csv};

%\legend{ Beta distribution-based, Simes-type,   Shifted Simes-type,  $R^j(t)$};

\end{axis}
\fill[white] (10,-1.1) rectangle (11,-0.5);
\end{tikzpicture}
\caption{For three different families $\mathbb{B}$, resulting $90\%$-confidence envelopes are shown for cut-offs in  $\mathbb{T}=[0.001,0.02]$ (van de Vijver data, see Section \ref{secdataM}). Moreover, for some of the permuted versions of the data, the corresponding numbers of rejections $R^j(t)$ are shown (dotted). Each confidence envelope lies above $90\%$ of these curves. The envelopes are based on the following families $\mathbb{B}$:  Simes-type (small dashes), shifted Simes-type (solid) and beta distribution-based (large dashes).}
\label{figenv}
\end{figure}
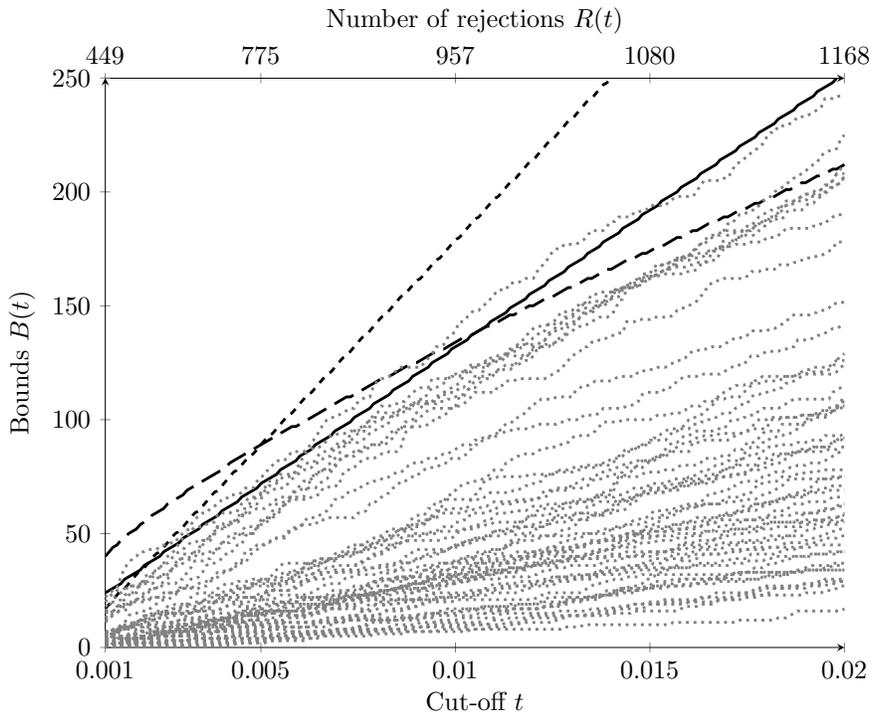

%\end{comment}

We now show that two existing multiple testing methods, Significance Analysis of Microarrays (SAM) \citep{tusher2001significance, hemerik2018false} and the single-step max\emph{T} method by \citet{westfall1993resampling}, are special cases of the general method at Theorem \ref{mmethod}.
These methods essentially only differ with respect to the family $\mathbb{B}$ of candidate envelopes on which they are based.  

Let $c\in \co$ be independent of the data. Consider the family of candidate  envelopes $\mathbb{B}=\{B^0,B^1,...,B^m\}$, where for $\lambda\in \{0,...,m\}$, $B^{\lambda}:\co\rightarrow \{0,...,m\}$ satisfies
\[
    B^{\lambda}(t)= 
\begin{cases}
    \lambda,& \text{if } t \leq c\\
    m,              & \text{otherwise.}
\end{cases}
\] 
Note that if Theorem \ref{mmethod} is applied based on these candidate functions, then the resulting upper bound $B^\text{m}(c)$ for $V(c)$ is simply the $\lceil(1-\alpha)w\rceil w^{-1}$-quantile of the values $R^j(c)$, $1\leq j \leq w$. This is precisely the (most basic) confidence bound in \citet{hemerik2018false}. That paper extends the Significance Analysis of Microarrays methodology by \citet{tusher2001significance}, who estimate the FDP using permutations, but do not provide a confidence bound.

Consider the family $\mathbb{B}=\{B^{\lambda}: \lambda\in [0,1]\}$, where $B^{\lambda}:[0,1]\rightarrow \{0,...,m\}$ is defined by
\[
    B^{\lambda}(t)= 
\begin{cases}
    0,& \text{if } t < \lambda \\
    m,              & \text{otherwise.}
\end{cases}
\] 
Applying Theorem \ref{mmethod}  to these candidate envelopes results in the upper bound $B^\text{m}=B^{\lambda'}$, where $\lambda'$ is the $\alpha$-quantile  of the values $\min_{1\leq i \leq m}P_i(g_jX)$, $1\leq j \leq w.$ The bound $B^{\lambda'}(t)$ equals zero for $t < \lambda' $, which means that the family-wise error rate is strongly controlled if the hypotheses $\{1\leq i \leq m : P_i < \lambda'\}$ are rejected. This is exactly the set of hypotheses that the single-step max\emph{T} method rejects \citep{westfall1993resampling}.
Moreover, using the iterative method in Section \ref{seciter}, the step-down max\emph{T} method  can be obtained.

%plaatje?

\section{Iterative method} \label{seciter}

\subsection{Introduction}

The method of Theorem \ref{mmethod} can be uniformly  improved with a method by \citet{goeman2011multiple}, which is related to closed testing \citep{marcus1976closed}. 
Unless the number of hypotheses is very small  (less than 20), \color{black} this method is usually computationally infeasible in the context of this paper. Hence we  discuss this method in the Supplementary Material. There we also prove that the method of \citet{goeman2011multiple} is equivalent to that in \citet{genovese2004stochastic, genovese2006exceedance}.

Below we will derive  a general, iterative method for improvement of the basic confidence envelope $B^{m}$.
In each iteration step, the method uses an FDP upper bound obtained in the previous step. Some existing family-wise error rate controlling methods, where in each step the rejections from the previous steps are used, are special cases of this method \citep[e.g.][]{holm1979simple,  westfall1993resampling}.

For each nonempty $\I\subseteq \{1,...,m\}$, consider a function $B_\I:\co\rightarrow \mathbb{N}$, such that $B_{\I}\leq B_{\J}$ whenever $\I \subseteq \J \subseteq \{1,...,m\}$ and such that
$B_{\N}$ is a confidence envelope.

In particular, we can consider
\begin{equation} \label{eq:BInonp}
B_\I=\min\Bigg\{B\in  \mathbb{B}: \text{ } w^{-1}\#\Big\{1\leq j \leq w: \bigcap_{t\in \co}\big \{ R_\I^j(t) \leq  B(t) \big \} \Big\} \geq 1-\alpha \Bigg\}.
\end{equation}
For this definition of $B_\I$, $B_{\{1,...,m\}}$ coincides with $B^\text{m}$. Thus, intuitively, $B^\text{m}$ is an envelope which takes into account the worst-case scenario that $\N=\{1,...,m\}$. If instead it were known (hypothetically) that $H_1$ were false, for example, then $B_{\{2,...,m\}}$ could be used as a confidence envelope.
The iterative method below uniformly improves $B_{\{1,...,m\}}$.

\subsection{Exact method} \label{im} \label{nip}

We now define the \emph{iterative method}.
\begin{theorem} \label{thmit}
Fix some $s\in \co$. Let $B^0=B_{\{1,...,m\}}$ and for $i\in \mathbb{N}$ iteratively define
$$B^{i+1}(t) = \max\{  B_{\K^c}(t): \K\in \R(s), \#\K=  R(s)- B^i(s)    \}.$$
It holds that
$B^0 \geq B^1\geq...$ and from a certain $i\in \mathbb{N}$, $B^i = B^{i+1}=...$.
The function $B^{\text{it}}=\min_{i\in \mathbb{N}} B^i$ is a confidence envelope.
\end{theorem}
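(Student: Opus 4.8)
The plan is to prove the three assertions of the theorem --- monotonicity $B^0 \geq B^1 \geq \cdots$, finite-time stabilization, and the confidence-envelope property of $B^{\text{it}}$ --- in that order; the first two are purely combinatorial, the third probabilistic. I read ``$\K \in \R(s)$'' as ``$\K \subseteq \R(s)$'', and I use repeatedly the monotonicity assumption $B_\I \leq B_\J$ whenever $\I \subseteq \J$, the identity $R_\N = V$ (valid because $g_1 = \mathrm{id}$), and the hypothesis that $B_\N$ is a confidence envelope.

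For monotonicity I would induct on $i$. At $i = 0$: any admissible $\K$ has $\K^c \subseteq \{1,\dots,m\}$, so $B_{\K^c} \leq B_{\{1,\dots,m\}} = B^0$, and taking the pointwise maximum gives $B^1 \leq B^0$. For the step, assume $B^i \leq B^{i-1}$; then $B^i(s) \leq B^{i-1}(s)$, so the sets $\K$ admissible at stage $i+1$ have cardinality $R(s) - B^i(s) \geq R(s) - B^{i-1}(s)$, the cardinality of the sets admissible at stage $i$. Given such a $\K$, choose $\K' \subseteq \K$ with $\#\K' = R(s) - B^{i-1}(s)$; then $\K'$ is admissible at stage $i$ and $\K^c \subseteq \K'^c$, hence $B_{\K^c} \leq B_{\K'^c} \leq B^i$. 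Taking the maximum over $\K$ yields $B^{i+1} \leq B^i$. (If $R(s) - B^i(s) < 0$, the maximum is over the empty collection; I would adopt the convention that the recursion halts there with $B^{i+1} := B^i$, or, equivalently, first replace $B^0$ by its improvement from Proposition~\ref{rel}, which is pointwise $\leq R$, so that monotonicity keeps $B^i(s) \leq R(s)$ throughout.)

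For stabilization and validity: the recursion reads $B^i$ only through the single integer $B^i(s) \in \{0,\dots,m\}$, and by the monotonicity just shown $\bigl(B^i(s)\bigr)_{i}$ is non-increasing, hence eventually constant, say for $i \geq i_0$ with $i_0 \leq B^0(s)$; then stages $i_0+1$ and $i_0+2$ use the same admissible collection, so $B^{i_0+1} = B^{i_0+2} = \cdots$ and $B^{\text{it}} = \min_i B^i = B^{i_0+1}$. For validity, put $E = \bigcap_{t\in\co}\{V(t) \leq B_\N(t)\}$, so $\mathbb{P}(E) \geq 1-\alpha$, and I claim that $B^i \geq B_\N$ on $E$ for all $i$, by induction. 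For $i=0$, $B^0 = B_{\{1,\dots,m\}} \geq B_\N$ holds everywhere. For the step, work on $E$ and suppose $B^i \geq B_\N$; then $B^i(s) \geq B_\N(s) \geq V(s)$, so $R(s) - B^i(s) \leq R(s) - V(s) = \#(\R(s) \setminus \N)$, which is $\geq 0$, and we may pick $\K \subseteq \R(s) \setminus \N$ with $\#\K = R(s) - B^i(s)$. This $\K$ is admissible at stage $i+1$ and, because $\K \cap \N = \emptyset$, satisfies $\N \subseteq \K^c$, whence $B_\N \leq B_{\K^c} \leq B^{i+1}$. Thus on $E$ we get $B^{\text{it}} = \min_i B^i \geq B_\N \geq V$ pointwise on $\co$, so $E \subseteq \bigcap_{t\in\co}\{V(t) \leq B^{\text{it}}(t)\}$, and $B^{\text{it}}$ is a confidence envelope.

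The conceptual heart --- and the only step that is not routine --- is the inductive step of validity: on the coverage event for $B_\N$, at least $R(s) - B^i(s)$ of the rejections at cut-off $s$ are true signals, so among the complements $\K^c$ entering the maximum there is one containing all of $\N$, and the assumed monotonicity of $\I \mapsto B_\I$ then transfers the coverage from $B_\N$ to $B^{i+1}$. The remaining points are bookkeeping: the empty-maximum case (handled by the convention above or by pre-applying Proposition~\ref{rel}) and the possibility that $\co$ is infinite (immaterial, since the recursion only ever evaluates $B^i$ at the single point $s$).
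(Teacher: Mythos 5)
Your proof is correct and takes essentially the same route as the paper's: you define the event $E=\bigcap_{t\in\co}\{V(t)\leq B_\N(t)\}$, argue by induction that on $E$ there is an admissible $\K\subseteq\R(s)$ with $\N\subseteq\K^c$, and use the monotonicity of $\I\mapsto B_\I$ to transfer coverage from $B_\N$ to $B^{i+1}$, exactly as in the paper. You additionally write out the monotonicity and stabilization claims and a convention for the empty-maximum case ($B^i(s)>R(s)$), points the paper's proof leaves implicit.
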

\begin{proof}
Define the event
$$ E :=\bigcap_{t\in \co}  \Big \{ V(t) \leq B_\N(t)     \Big \}.$$
Assume $E$ holds.
For $i=0$ we have $V(t) \leq B^i(t)$  for all $t\in \co$. Let $i\in \mathbb{N}$ and suppose that the same holds for this $i$.
Then there is a set $\K\subseteq \R(s)$ with $\#\K=  R(s)- B^i(s)$ such that $\N\subseteq \K^c=\{1,...,m\}\setminus \K$. 
Thus $B_{\K^c}\geq B_{\N}$. In practice it is not known for which set $\K$ this holds, but we know that $V(t)\leq B_\N(t) \leq  B^{i+1}(t)$ for all $t\in \co$.
%Let $i\in \mathbb{N}$ and suppose that $V(t)\leq B^i(t)$ for all $t\in \co$.
%Then there is a set $\K\subseteq \R(s)$ with $\#\K=  R(s)- B^i(s)$ such that $\N\subseteq \K^c$. 
%Thus, like before,  $V(t)\leq B^i(t)$ for all $t\in \co$ under $E$. 

Thus, by induction, under $E$, for all $i\in \mathbb{N}$, $V(t) \leq  B^{i+1}(t)$ for all $t\in \co$.
Since $\mathbb{P}(E)\geq 1-\alpha$, it follows that $B^{\text{it}}$ is a confidence envelope.
\end{proof}

In many practical situations convergence of the  decreasing sequence of integers  $B^0,B^1,...$ (which happens as soon as two consecutive values are equal) \color{black} is reached after only a few steps.

This iterative procedure can be modified in several ways. 
Above, in the $i$-th step $B^{i}$ is computed using one cut-off $s$.
A better bound could be obtained by doing this for many $s\in \co$ and letting $B^i$ be the pointwise minimum of all the improved bounds obtained. The resulting bound $B^i$ is still  valid under $E$.
Including such steps however increases the computational burden, so it may be better to use the  method based on one cut-off $s$ as described above.

%%%\subsection{Nonparametric iterative procedure} \label{nip}

When  $B_{\I}$ is  defined as \eqref{eq:BInonp}, we will refer to the iterative method as the \emph{nonparametric iterative method}. This method is a uniform improvement of Meinshausen's envelope $B^{\text{m}}$  in Section \ref{mein}, if  the same family $\mathbb{B}$ is used.

\begin{comment}
The  nonparametric iterative procedure is much faster than the corresponding procedure  based on closed testing \citep{goeman2011multiple}. However,  it can still be computationally infeasible, since performing one step of this procedure requires calculating a maximum of a possibly very large set. The method is only  feasible if this set is not too large, which happens for instance if $\R(s)$ or $B_{\{1,...,m\}}(s)$ is small. This consideration may be used to guide the choice of $s$. For example, in step $i$, $s$ could be chosen such that $B^i(s)=1$. In this case the number of combinations that needs to be checked to compute $B^{i+1}(s)$ is $R(s)$, i.e. it is at most linear in $m$. In case $s$ is taken such that $B^i(s)=0$, the method is even faster. In that case, $B^1(t)$ is simply $B_{R^c(s)}(t)$, $t\in \co$, which can considerably improve the single-step bound $B^0(t)$ if $R(s)$ is large.
\end{comment}

The  nonparametric iterative procedure is much faster than the corresponding procedure  based on closed testing \citep{goeman2011multiple}. However,  it can still be computationally infeasible, since performing one step of this procedure requires calculating a maximum of a set of size
$$\binom{R(s)}{B^i(s)}.$$
This consideration may be used to guide the choice of $s$. In particular, $s$ can be taken such that $B^i(s)$ is small.
Note that $s$ can even  be taken such that $B^i(s)=0$, leading to a very fast method. 
In that case, $B^1(t)=B_{R^c(s)}(t)$, $t\in \co$, which can considerably improve the single-step bound $B^0(t)$ if $R(s)$ is large.

\subsection{Approximation method} \label{secap}

We suggest a method for approximating the confidence envelope $B^{\text{it}}$,  for cases where the iterative method computationally infeasible. The approximation method is feasible when there are many thousands of hypotheses. \color{black}
In the iterative method, computing any $B^i(s) $ requires determining a maximum of a potentially very large set. The approximation method  computes the maximum over a smaller, random subset, to limit the computation time.

Write $\widehat{B}^0:=B_{\{1,...,m\}}$ and for $i=1,2,...$ iteratively compute
$\widehat{B}^{i}(s):= \max\{  B_{\K^c}(s): \K\in \rset^{i}   \},  $
where $\rset^i$ is some large  random subcollection of 
$\{\K\in \R(s):\text{ } \#\K=  R(s)-\widehat{B}^{i-1}(s)\}$.
%This procedure finishes as soon as a computed estimate 
%$\widehat{B}^{i+1}(s)$ equals $\widehat{B}^{i}(s)$, or earlier as desired. 
Recall that if $B^i(s)=B^{i+1}(s)$, then $B^{i+1}=B^{\text{it}}$. Hence if $\widehat{B}^{i}(s)=\widehat{B}^{i+1}(s)$, then $\widehat{B}^{i+1}(t)=\max\{  B_{\K^c}(t): \K\in \rset^{i+1}    \} $ can be seen as an estimate of $B^{\text{it}}(t)$, $t\in \co$.

%old text moved to below

Observe that for $\#\rset^1 \rightarrow \infty$, almost surely $\widehat{B}^{1}(s)\rightarrow B^{1}(s)$ (assuming $\rset^1$  is uniformly sampled). Similarly, if $\#\rset^1,..., \#\rset^{i+1} \rightarrow \infty$, then $\widehat{B}^{i}(s)\rightarrow B^i(s)$ and hence $\widehat{B}^{i+1} \rightarrow {B}^{i+1}$ uniformly.
Thus, the approximation method becomes exact as the number of combinations that  it checks increases to infinity.
For finite $\#\rset^i$, the approximation method may potentially be anti-conservative, but this was not the case in our simulation settings.

For $\K\in \rset^{i}$, the time needed to compute $\widehat{B}^{i}(s)$ is linear in $m$, so that the computation time for the approximation method is also linear in $m$.
\color{black}

\section{Simulations} \label{secsimsM}

\subsection{Simulation setting}

To compare the methods of this paper, we applied them to simple simulated data. In Section \ref{simit} the performance of the iterative method as compared to the single-step method is investigated. In Section \ref{simap} the validity of the approximation method is discussed. 
See the data analysis in Section \ref{secdataM} for a comparison of our nonparametric methods with the parametric variants.  

The simulated data matrix was the $20\times m$-matrix ${X}={X}'+{Z}.$
It can be seen as representing $m$ measurements for $20$ persons. 
Here ${X}'$ is a $20\times m$-matrix of independent normally distributed variables with variance 1. 
For some $0\leq F \leq m$, in the first $F$ columns of ${X}$ the first $10$ entries had mean $1.5$ and all other entries  had mean $0$.
The matrix ${Z}$, which determined the correlation structure of ${X}$, is defined by ${Z}_{ji}:=s_i Z_j$, where
$s_i=1$ for $i$ odd and  $s_i=-1$ for $i$ even. Here each $Z_j$ is independent and normally distributed with mean $0$ and  standard deviation $ {\sigma_Z}$.
For $1\leq j \leq 20$ and $1\leq i< i' \leq m$ note that the correlation is $\rho({X}_{ji},{X}_{ji'})=\pm  (\sigma_Z^2)/(1+ {\sigma_Z}^2).$

For each $1\leq i \leq m$, let $H_i$ be the null hypothesis that ${X}_{1,i}...,{X}_{20,i}$ are independent and standard normally distributed. 
Thus the  fraction of true null hypotheses was $\pi_0:=(m-F)/m$.
For each $H_i$,  $P_i$ was defined as the \emph{p}-value from a two sided  t-test comparing the first $10$ individuals with the last $10$.

As  $G$ we took all $20!$ permutations of cases and controls. In all the simulations  we used $w=100$, i.e. each time we drew $99$ random permutations (with replacement) and added the identity.  For larger $w$ similar results are obtained  \citep[see also][]{marriott1979barnard}.
We took $\alpha=0.1$. The values of $m$, $\pi_0$ and  $|\rho|$ are specified per case below.

\subsection{Performance of the iterative method} \label{simit}

We now illustrate that the nonparametric single-step method of Section \ref{mein}  (Theorem \ref{mmethod}) \color{black} is improved by the corresponding iterative procedure (Section \ref{nip}). We took $m=50$ since the iterative method is not always feasible for large numbers of hypotheses.  When the number of hypotheses exceeds a few hundred, the user will usually need to use the approximation method (Section \ref{secap}). 

We will see that the improvements with the iterative method are limited, which is due to the small $m$.  For larger $m$, larger improvements are obtained, see Section \ref{simap}. \color{black} We took $\co=[0.001,0.01]$. 
As candidate envelopes we took
 $B^{\lambda}(t)=\#\{ 1\leq i \leq m: i\lambda -0.001 \leq t   \}$, $\lambda\in [0,\infty)$. In the iterative method $s$ was taken to be $0.005$. The iterative method was always terminated after three steps, when it had usually converged.

We estimated  the expected values of the FDP bounds  (which are of the form $B(t)/R(t)$) \color{black} for different values of $\pi_0$ and $|\rho|$  (where $|\rho|$ depends on $\sigma_Z^2$). \color{black} Above the columns the cut-offs that were used, are shown. For example, a cut-off of $0.01$ means that all hypotheses with \emph{p}-values smaller than $0.01$ were rejected.

The results are shown in Table \ref{table:it}. 
The simulations  in the setting $\pi_0=0.4, |\rho|=0.5$ took the longest, with a few seconds per analysis on average on a standard PC, i.e.  about half an hour for 1000 simulations.
 Each estimate is based on 1000 simulations,  so that for each setting the standard error of the mean difference between the two bounds is smaller than $9\cdot 10^{-4}$. Note that regardless of the standard error, the difference in performance is significant, since by construction the iterative method provides a bound at least as small as the bound from the single-step method. \color{black}
 
For the cut-off $0.001$, the upper bounds were usually zero. This is not surprising: for such a small cut-off, it is indeed very likely that there are no false positives (given the limited number of hypotheses, $m=50$). \color{black} The improvement with the iterative method was largest when $\pi_0$ was small, i.e. when there were many false  null \color{black} hypotheses. When $m$ was larger, bigger improvements were obtained, see Sections \ref{simap} and \ref{secdataM}. \color{black}

\begin{table}[!ht] \normalsize  
\caption{ Comparison of the single-step method with the \emph{iterative method} (italic). The values shown are the estimated expected values of the bounds. The values above the columns indicate the cut-offs.   } 
\begin{center}
    \begin{tabular}{ l  l l  l  l l l  l }    
\hline \\[-0.4cm]
& & \multicolumn{6}{l}{\qquad \qquad \qquad \qquad    Cut-off} \\ \cline{3-8} 
$\pi_0$ &  $|\rho|$ &  \multicolumn{2}{l}{\qquad  0.001}  & \multicolumn{2}{l}{\qquad  0.005}  & \multicolumn{2}{l}{\qquad  0.01}  \\ \hline %\hline
$0.8$   &  0   \qquad \quad &  0.000 & $\emph{0.000}$ &  0.172 & $\emph{0.170}$ & 0.306 & $\emph{0.302}$  \\ 
$0.8$   &  0.5  \qquad \quad & 0.000 & $\emph{0.000}$ & 0.216 & $\emph{0.216}$ & 0.438 &  $\emph{0.435}$  \\ 
$0.6$   &  0    \qquad \quad & 0.000 & $\emph{0.000}$ & 0.104 & $\emph{0.101}$ & $0.187$ & $\emph{0.179}$   \\ 
$0.6$   &   0.5   \qquad \quad & 0.002 & $\emph{0.002}$ & 0.201  & $\emph{0.198}$ & 0.323 &  $\emph{0.318}$  \\ 
$0.4$   & 0    \qquad \quad &   0.000 & $\emph{0.000}$ & 0.073  & $\emph{0.067}$ & 0.131 & $\emph{0.117}$   \\
$0.4$   & 0.5    \qquad \quad &  0.001 & $\emph{0.001}$  & 0.148 & $\emph{0.144}$ & 0.233 & $\emph{0.228}$ \\ \hline    
    \end{tabular}
\label{table:it}
\end{center}
\end{table}

\subsection{Performance of the approximation method} \label{simap}

The approximation method is much faster than the iterative method and can be used when there are many thousands of hypotheses. \color{black}
We first compare the approximation method (Section \ref{secap}) with the iterative method. This is done in the settings of Section \ref{simit} with $m=50$.
Write $\overline{FDP}_{\text{it}}(t)=B^{\text{it}}(t)/R(t)$ and let $\overline{FDP}_{\text{ap}}$ be the estimate of $\overline{FDP}_{\text{it}}$ obtained with the approximation method. Again three iteration steps were used.

We recorded the average difference between the iterative and approximate bound, $|\overline{FDP}_{\text{it}}- \overline{FDP}_{\text{ap}} |$. In each step of the approximation method  100 random  combinations were used (uniformly drawn with replacement), i.e. $\#\rset^1=\#\rset^2=100$. Despite this limited number of random combinations, the approximations were already rather good: in all settings the mean value of  $|\overline{FDP}_{\text{it}}- \overline{FDP}_{\text{ap}} |$ was at most 0.0008 (results not shown).
This means that  the difference $\overline{FDP}_{\text{it}}- \overline{FDP}_{\text{ap}} $ was usually $0$ and sometimes slightly larger.
Naturally, when $\#\rset^1$ and $\#\rset^2$ were taken larger, the approximations were even better.

%Onderstaande tabel is nu goed:
\begin{comment}
\begin{table}[!ht] \normalsize  
\caption{Comparison of  the iterative method and the approximation method. The values shown are the average absolute differences between the bounds, based on 1000 simulations per setting.  The values above the columns indicate the cut-offs.} 
\begin{center}
    \begin{tabular}{ l  l l  l  l l l  l }    
\hline \\[-0.4cm]
$\pi_0$ &  $|\rho|$ & \qquad \quad 0.001 &     0.005  & 0.01 \\ \hline %\hline
$0.8$   &  0 &  \qquad \quad  0 & 0.0001 & 0.0001   \\ 
$0.8$   &  0.5 &  \qquad \quad 0.0001 & 0 & 0   \\ 
$0.6$   &  0   & \qquad \quad 0 & 0.0000 & 0.0003     \\ 
$0.6$   &   0.5  & \qquad \quad 0 & 0.0000 & 0.0000    \\ 
$0.4$   & 0  &  \qquad \quad   0 & 0.0007 & 0.0008   \\
$0.4$   & 0.5  &  \qquad \quad   0 &  0  & 0.0000  \\ \hline    
    \end{tabular}
\label{table:ap}
\end{center}
\end{table}
\end{comment}

Note that whether $\overline{FDP}_{\text{ap}}$ closely approximates  $\overline{FDP}_{\text{it}}$ is irrelevant for our purposes, as long as 
$$\mathbb{P} \Big (\bigcap_{t\in \co} \{FDP(t) \leq \overline{FDP}_{\text{ap}}(t)\} \Big ) \geq 1-\alpha.$$
This was always the case in the settings of sections \ref{simit} and in the analogous setting with $m=1000$ (results not shown).

Table \ref{table:basicvsap} shows the improvement with the approximation method relative to the single-step method in the settings with $m=1000$. The improvement is largest for small $\pi_0$ and $|\rho|$. It can be seen that the bounds do not always increase with the cut-off, which is due to the choice of $\mathbb{B}$ and the fact that $R(t)$ increases with $t\in \co$. The computation time was about 15 seconds per analysis on average, i.e. a few hours per setting for 1000 simulations.\color{black}

% Goede code, met shift=0.001 :

\begin{table}[!ht] \normalsize  
\caption{ Comparison of the single-step method with the \emph{approximation method} (italic). The values shown are the estimated expected values of the bounds. Each estimate is based on 1000 simulations,  so that for each setting and cut-off the standard error of the mean difference between the two bounds is smaller than $5\cdot 10^{-4}$.} 
\begin{center}
    \begin{tabular}{ l  l l  l  l l l  l }    
\hline \\[-0.4cm]
& & \multicolumn{6}{l}{\qquad \qquad \qquad \qquad    Cut-off}\\ \cline{3-8} 
$\pi_0$ &  $|\rho|$ &  \multicolumn{2}{l}{\qquad 0.001}  & \multicolumn{2}{l}{\qquad 0.005}  & \multicolumn{2}{l}{  \qquad 0.01}  \\ \hline %\hline
$0.8$   &  0   \qquad \quad &  0.045 & $\emph{0.045} $&  0.086 & $ \emph{0.082} $& 0.132 &  $\emph{0.127} $ \\ 
$0.8$   &  0.5   \qquad \quad &  0.346 &  $\emph{0.344}$ &  0.346 &  $\emph{0.343}$ & 0.418 &  $\emph{0.414} $\\ 
$0.6$   &  0    \qquad \quad &  0.025 & $ \emph{0.022} $&  0.048 & $ \emph{0.041}$ & 0.075 & $ \emph{0.064} $ \\ 
$0.6$   &   0.5   \qquad \quad  &  0.194 & $ \emph{0.189}$ &  0.188 & $ \emph{0.182}$ & 0.227 & $ \emph{0.219} $ \\ 
$0.4$   & 0    \qquad \quad &  0.020 &  $\emph{0.014} $&  0.037 & $ \emph{0.026}$ & 0.058 & $ \emph{0.041}$ \\ 
$0.4$   & 0.5    \qquad \quad &   0.144 &  $\emph{0.137}$ &  0.132 & $ \emph{0.124} $& 0.160 &$  \emph{0.150}$  \\  \hline    
    \end{tabular}
\label{table:basicvsap}
\end{center}
\end{table}

%\subsection{Nonparametric closed testing-based method} \label{simct}

\section{Data analysis} \label{secdataM}
%Data meinshausen paper te doen? (voor kleine $\overline{V}$).

To illustrate and compare the methods in this paper, we apply them to a dataset by van de Vijver, available in the R package \emph{cancerdata}.
The dataset contains survival data on 295 cancer patients. For each individual, time to metastasis (if any), survival and the follow-up time are known. Moreover, for each individual the expression rates of 4928 genes  are known (we excluded 20 genes with missing values).

We consider hypotheses $H_i$, $1\leq i \leq 4928$, where $H_i$ is the hypothesis   that metastasis-free survival is not associated with the expression rate of gene $i$.   
The set $G$ of transformations used was the collection of all $295!$ maps that  permute (as pairs) the follow-up times and metastasis-free survival indicators of the $295$ individuals. Here we took $w=100$, i.e. we used $99$ random permutations and included the original data. A good feature of our methods is that they have proven validity  if a finite number of random permutations are used. \color{black}
Taking $w$ larger leads to similar results \citep[see also][]{marriott1979barnard}.

For each gene separately, we fitted a Cox proportional hazards model with this gene as the only covariate. We then computed a score test \emph{p}-value for association with metastasis-free survival. The validity of the following nonparametric methods does not rely on the validity of the assumptions of the Cox model. Indeed, the \emph{p}-values need not be exact as long as for each permutation they are defined in the same way. (Note that in the proofs, we do not require the null \emph{p}-values to be exactly  standard uniform.)

Note that we require Assumption \ref{exchangeabilityas} to hold, which says that the joint distribution of  the gene expression rates corresponding to $\N$ (rather than just the marginals) should be independent of metastasis-free survival. This property is implied if we assume the validity of the following directed acyclic graph:
$$Y \leftarrow  E \rightarrow F \leftarrow N \rightarrow T,$$ where
$Y$ is the survival outcome;
$E$ is all survival-relevant (latent) biology;
$F$ are the variables (genes) for which the null is false;
$N$ is all survival-irrelevant (latent) biology and
$T$ are the variables (genes) for which the null is true.
Here arrows indicate conditional dependencies.  The main assumption that this model makes, is independence of the joint distributions of the survival-related biology $E$ and the null variables $T$.
This assumption implies the validity of  Assumption \ref{exchangeabilityas}. \color{black}

We applied eight different methods to the data. 
With each method we obtained simultaneous FDP bounds. The set $\co$ of cut-offs is specified per case. We took $\alpha=0.1$, so that the  simultaneous bounds are valid with probability at least $90\%$.   For three cut-offs, the bounds are shown in Table \ref{table:families}. 
Here the rows correspond to the methods. The first two methods are parametric and the other methods  are based on permutations.
We will now discuss  the methods in the order of the rows of Table \ref{table:families} and compare the results.

\begin{table}[!ht] \normalsize  
\caption{Comparison of eight methods. For three cut-off values, simultaneous $90\%$-confidence  upper bounds for the FDP are shown.} 
\begin{center}
    \begin{tabular}{ l l l l l }    
\hline \\[-0.4cm]

  & & \multicolumn{3}{l}{\qquad \quad Cut-off}\\ \cline{3-5} 
 Method & $\co$  &    0.001 & 0.005 & 0.01  \\ \hline %\hline
1: Parametric (Simes)       \qquad &$[0,1]$  \quad & 0.096 & 0.280 & 0.409 \\ 
2: Parametric (no Simes)       \qquad & $[0,1]$ \quad & 0.552 & 0.741 & 0.790 \\ 
3: Beta      \qquad &  $[0.001,0.01]$    \quad & 0.076 & 0.101 & 0.125 \\
4: Simes-type      \qquad & $[0.001,0.01]$   \quad  &   0.038  & 0.115 & 0.186 \\ 
5: Simes-type        \qquad & $[0,1]$  \quad  &  0.143  & 0.397 & 0.512 \\ 
6: Simes-type (shift)        \qquad &  $[0,1]$  \quad  &   0.053   & 0.093 & 0.137 \\ 
7: Iterative         \qquad & $[0.001,0.01]$  \quad   &  0.033   & 0.098 &  0.158  \\ 
8: Iterative (shift)     \qquad & $[0.001,0.01]$   \quad  &  0.047 &  0.085 & 0.125 \\  
Number of rejections      \qquad &    \quad  & 449 &  775  & 957 \\ \hline  
    \end{tabular}
\label{table:families}
\end{center}
\end{table}

\begin{enumerate}
\item The first  method used (see the first row of Table \ref{table:families}) is the parametric closed testing-based method with local tests based on Simes' probability inequality \citep[see][or the Supplementary Information]{goeman2011multiple, meijer2017simultaneous}
The bounds were obtained using the \emph{pickSimes} function in the R package \emph{cherry}. Note that Simes' probability inequality is an assumption, which cannot be guaranteed to hold.

\item The second method  is the same as the first, except that the local tests are not based on Simes' probability inequality, but on a different probability inequality  \citep[by][]{hommel1983tests} that always holds. Since this method uses no assumption on the dependence structure of the \emph{p}-values, the bounds obtained are much larger than those from the first method. 

\item Thirdly, we applied the nonparametric single-step method (Section \ref{mein}), where the family $\mathbb{B}$ of candidate envelopes was based on the beta distribution as explained in Section \ref{mein}. We took $\co=[0.001,0.01]$. This is arbitrary, but represents a reasonable range of thresholds of interest. Note that the obtained bounds are better than those derived with the two parametric methods. The reason for this is twofold. First, permutations were used such that the method took into account the dependence structure of the data. Second, bounds were not computed for all possible sets of hypotheses, but only for cut-offs in $\co$. The nonparametric method effortlessly adapts to  $\co$, while  there is no known parametric method that does this.

\item  Methods 3 and 4 are the same, except that in method 4  $\mathbb{B}$ was taken to be the family of Simes-type candidate envelopes given at \eqref{eq:simesenvelopes}. These candidate envelopes $B^{\lambda}(t)$ are relatively small for small cut-offs $t$, compared to the family based on the beta distribution. Consequently it is seen in the table that the bound for method 4 is better than that for method 3 when the cut-off is small (0.001). When the cut-off is larger (0.01) it is the other way around. 

\item Methods 4 and 5 are the same, except that in method 5 $\co=[0,1]$ was taken. Since the bounds are now uniform over a larger set, they are larger than those obtained with method 4 for all cut-offs in $[0.001,0.01]$. 

\item Method 6 is the same as method 5, except that  in the definition of the candidate envelopes $B^{\lambda}(t)$ at \eqref{eq:simesenvelopes}, $\lambda i$ is replaced by $\lambda i- 0.001$.  By comparing rows 5 and 6 in the table, it can be seen that this leads to much better (i.e. smaller) upper bounds for many cut-offs (but not for  cut-offs very close to zero, which are now shown in the table). The reason is that method 5 is too sensitive to the smallest \emph{p}-values, whose $0.1$-quantile is quite small relative to their mean \citep[see also][Section 4.3]{blanchard2017post}.
(The shift of $-0.001$ is somewhat arbitrary, but compared to other shifts it provided a good trade-off between obtaining good bounds for the small and the large cut-offs.)

\item Methods 7 and 8 are variants of the approximation of the iterative method as defined in Section \ref{secap}. The first step of method 7 coincides with method 4, and then additional iterative steps were performed as in Section \ref{secap} (with $s=0.005$ and $\#\rset^i=1000$). Note the uniform  improvement in comparison to method 4. The computation time was about $40$ minutes on a standard PC. Note however that, as stated in Section \ref{secap}, the computation time is only linear in the number of hypotheses. \color{black} 

\color{black}

\color{black}

\item Method 8 coincides with method 7, except that the family $\mathbb{B}$ was shifted as in method 6. Compared to method 7, this  improves the upper bounds for the larger cut-offs, as before.

\end{enumerate}
The first conclusion to be drawn from these results, is that the a priori chosen family $\mathbb{B}$ of candidate  envelopes has a large impact on the resulting confidence envelope.  The second conclusion is that
when $\co$ becomes smaller than $[0,1]$, the bounds from the nonparametric method can improve substantially, while there is no known parametric method that adapts to $\co$.

Although the performance of the methods strongly relies on the family $\mathbb{B}$, it should be noted that one family of candidate envelopes cannot be uniformly better than any other. For example, for very small cut-offs (not shown) method 6 was outperformed by method 5.

Precisely because the family $\mathbb{B}$ has a large impact on the results, it should be emphasized that this set must be chosen before looking at the data. In the opposite case, the family $\mathbb{B}$ would be selected based on the data in such a way that the results are as attractive as possible, which could induce selection bias.

\section{Discussion}

The multiple testing procedure by \citet{meinshausen2006false} is a good example of an `exploratory' method \citep{goeman2011multiple}. It offers the researcher freedom to select, based on the data, a set of hypotheses of interest and to obtain a confidence statement on these post hoc selected hypotheses. Until now it was the only permutation-based method that provides simultaneous confidence bounds for the FDP or exceedance control of the FDP.

The methods in this paper allow the user to specify a range of \emph{p}-value thresholds of interest, as well as a set of candidate confidence envelopes. Moreover, the iterative method allows choosing a parameter $s$, which influences power and computational intensity. Various choices for these parameters have been considered in this paper, and future work may provide additional  guidelines for choosing these.

Our methodology relies on an assumption of joint invariance, which underlies most existing permutation-based multiple testing methods. This assumption needs to be argued for in concrete cases, for example as in Section \ref{secdataM}.
\color{black}

In this work we discuss only \emph{p}-values as test statistics, but many of the results can in principle be generalized to arbitrary test statistics  (with possibly unknown null distribution). \color{black} Correspondingly, when \emph{p}-values are used, these are not required to be exact.

\bibliographystyle{myplainnat}
\bibliography{references}

\begin{thebibliography}{35}
\providecommand{\natexlab}[1]{#1}
\providecommand{\url}[1]{\texttt{#1}}
\expandafter\ifx\csname urlstyle\endcsname\relax
  \providecommand{\doi}[1]{doi: #1}\else
  \providecommand{\doi}{doi: \begingroup \urlstyle{rm}\Url}\fi

\bibitem[Benjamini and Hochberg(1995)]{benjamini1995controlling}
Benjamini, Y. and Hochberg, Y.
\newblock Controlling the false discovery rate: a practical and powerful
  approach to multiple testing.
\newblock \emph{Journal of the royal statistical society. Series B
  (Methodological)}, pages 289--300, 1995.

\bibitem[Blanchard et~al.(2008)Blanchard, Roquain, et~al.]{blanchard2008two}
Blanchard, G., Roquain, E., et~al.
\newblock Two simple sufficient conditions for {FDR} control.
\newblock \emph{Electronic journal of Statistics}, 2:\penalty0 963--992, 2008.

\bibitem[Blanchard et~al.(2017)Blanchard, Neuvial, and
  Roquain]{blanchard2017post}
Blanchard, G., Neuvial, P., and Roquain, E.
\newblock Post hoc inference via joint family-wise error rate control.
\newblock \emph{arXiv preprint arXiv:1703.02307}, 2017.

\bibitem[Cai and Sarkar(2008)]{cai2008modified}
Cai, G. and Sarkar, S.~K.
\newblock Modified {S}imes’ critical values under independence.
\newblock \emph{Statistics \& Probability Letters}, 78\penalty0 (12):\penalty0
  1362--1368, 2008.

\bibitem[Farcomeni(2009)]{farcomeni2009generalized}
Farcomeni, A.
\newblock Generalized augmentation to control the false discovery exceedance in
  multiple testing.
\newblock \emph{Scandinavian Journal of Statistics}, 36\penalty0 (3):\penalty0
  501--517, 2009.

\bibitem[Genovese and Wasserman(2004)]{genovese2004stochastic}
Genovese, C. and Wasserman, L.
\newblock A stochastic process approach to false discovery control.
\newblock \emph{Annals of Statistics}, pages 1035--1061, 2004.

\bibitem[Genovese and Wasserman(2006)]{genovese2006exceedance}
Genovese, C.~R. and Wasserman, L.
\newblock Exceedance control of the false discovery proportion.
\newblock \emph{Journal of the American Statistical Association}, 101\penalty0
  (476):\penalty0 1408--1417, 2006.

\bibitem[Goeman et~al.(2016)Goeman, Meijer, Krebs, and
  Solari]{goeman2016simultaneous}
Goeman, J., Meijer, R., Krebs, T., and Solari, A.
\newblock Simultaneous control of all false discovery proportions in
  large-scale multiple hypothesis testing.
\newblock \emph{arXiv preprint arXiv:1611.06739}, 2016.

\bibitem[Goeman and Solari(2011)]{goeman2011multiple}
Goeman, J.~J. and Solari, A.
\newblock Multiple testing for exploratory research.
\newblock \emph{Statistical Science}, 26\penalty0 (4):\penalty0 584--597, 2011.

\bibitem[Goeman and Solari(2014)]{goeman2014multiple}
Goeman, J.~J. and Solari, A.
\newblock Multiple hypothesis testing in genomics.
\newblock \emph{Statistics in medicine}, 33\penalty0 (11):\penalty0 1946--1978,
  2014.

\bibitem[Gou and Tamhane(2014)]{gou2014generalized}
Gou, J. and Tamhane, A.~C.
\newblock On generalized {S}imes critical constants.
\newblock \emph{Biometrical Journal}, 56\penalty0 (6):\penalty0 1035--1054,
  2014.

\bibitem[Guo et~al.(2014)Guo, He, Sarkar, et~al.]{guo2014further}
Guo, W., He, L., Sarkar, S.~K., et~al.
\newblock Further results on controlling the false discovery proportion.
\newblock \emph{The Annals of Statistics}, 42\penalty0 (3):\penalty0
  1070--1101, 2014.

\bibitem[Hemerik and Goeman(2018)]{hemerik2018false}
Hemerik, J. and Goeman, J.~J.
\newblock False discovery proportion estimation by permutations: confidence for
  significance analysis of microarrays.
\newblock \emph{Journal of the Royal Statistical Society: Series B (Statistical
  Methodology)}, 80\penalty0 (1):\penalty0 137--155, 2018.

\bibitem[Hemerik and Goeman(2017)]{hemerik2017exact}
Hemerik, J. and Goeman, J.
\newblock Exact testing with random permutations.
\newblock \emph{TEST (Online First version)}, 2017.

\bibitem[Holm(1979)]{holm1979simple}
Holm, S.
\newblock A simple sequentially rejective multiple test procedure.
\newblock \emph{Scandinavian journal of statistics}, pages 65--70, 1979.

\bibitem[Hommel(1983)]{hommel1983tests}
Hommel, G.
\newblock Tests of the overall hypothesis for arbitrary dependence structures.
\newblock \emph{Biometrische Zeitschrift}, 25\penalty0 (5):\penalty0 423--430,
  1983.

\bibitem[Korn et~al.(2004)Korn, Troendle, McShane, and
  Simon]{korn2004controlling}
Korn, E.~L., Troendle, J.~F., McShane, L.~M., and Simon, R.
\newblock Controlling the number of false discoveries: application to
  high-dimensional genomic data.
\newblock \emph{Journal of Statistical Planning and Inference}, 124\penalty0
  (2):\penalty0 379--398, 2004.

\bibitem[Korn et~al.(2007)Korn, Li, McShane, and Simon]{korn2007investigation}
Korn, E.~L., Li, M.-C., McShane, L.~M., and Simon, R.
\newblock An investigation of two multivariate permutation methods for
  controlling the false discovery proportion.
\newblock \emph{Statistics in medicine}, 26\penalty0 (24):\penalty0 4428--4440,
  2007.

\bibitem[Langsrud(2005)]{langsrud2005rotation}
Langsrud, {\O}.
\newblock Rotation tests.
\newblock \emph{Statistics and computing}, 15\penalty0 (1):\penalty0 53--60,
  2005.

\bibitem[Lehmann and Romano(2012)]{lehmann2012generalizations}
Lehmann, E.~L. and Romano, J.~P.
\newblock Generalizations of the familywise error rate.
\newblock In \emph{Selected Works of EL Lehmann}, pages 719--735. Springer,
  2012.

\bibitem[Marcus et~al.(1976)Marcus, Eric, and Gabriel]{marcus1976closed}
Marcus, R., Eric, P., and Gabriel, K.~R.
\newblock On closed testing procedures with special reference to ordered
  analysis of variance.
\newblock \emph{Biometrika}, 63\penalty0 (3):\penalty0 655--660, 1976.

\bibitem[Marriott(1979)]{marriott1979barnard}
Marriott, F.
\newblock Barnard's {M}onte {C}arlo tests: How many simulations?
\newblock \emph{Applied Statistics}, pages 75--77, 1979.

\bibitem[Meijer et~al.(2017)Meijer, Krebs, Solari, and
  Goeman]{meijer2017simultaneous}
Meijer, R., Krebs, T., Solari, A., and Goeman, J.
\newblock Simultaneous control of all false discovery proportions by an
  extension of {H}ommel's method.
\newblock \emph{arXiv preprint arXiv:1611.06739v2}, 2017.

\bibitem[Meinshausen(2006)]{meinshausen2006false}
Meinshausen, N.
\newblock False discovery control for multiple tests of association under
  general dependence.
\newblock \emph{Scandinavian Journal of Statistics}, 33\penalty0 (2):\penalty0
  227--237, 2006.

\bibitem[Meinshausen and B{\"u}hlmann(2005)]{meinshausen2005lower}
Meinshausen, N. and B{\"u}hlmann, P.
\newblock Lower bounds for the number of false null hypotheses for multiple
  testing of associations under general dependence structures.
\newblock \emph{Biometrika}, 92\penalty0 (4):\penalty0 893--907, 2005.

\bibitem[Meinshausen et~al.(2011)Meinshausen, Maathuis, B{\"u}hlmann,
  et~al.]{meinshausen2011asymptotic}
Meinshausen, N., Maathuis, M.~H., B{\"u}hlmann, P., et~al.
\newblock Asymptotic optimality of the westfall--young permutation procedure
  for multiple testing under dependence.
\newblock \emph{The Annals of Statistics}, 39\penalty0 (6):\penalty0
  3369--3391, 2011.

\bibitem[Pesarin and Salmaso(2010)]{pesarin2010permutation}
Pesarin, F. and Salmaso, L.
\newblock \emph{Permutation tests for complex data: theory, applications and
  software}.
\newblock John Wiley \& Sons, 2010.

\bibitem[R{\o}dland(2006)]{rodland2006simes}
R{\o}dland, E.~A.
\newblock Simes' procedure is ‘valid on average’.
\newblock \emph{Biometrika}, 93\penalty0 (3):\penalty0 742--746, 2006.

\bibitem[Schwartzman(2012)]{schwartzman2012comment}
Schwartzman, A.
\newblock Comment: Fdp vs fdr and the effect of conditioning.
\newblock \emph{Journal of the American Statistical Association}, 107\penalty0
  (499):\penalty0 1039--1041, 2012.

\bibitem[Schwartzman and Lin(2011)]{schwartzman2011effect}
Schwartzman, A. and Lin, X.
\newblock The effect of correlation in false discovery rate estimation.
\newblock \emph{Biometrika}, 98\penalty0 (1):\penalty0 199--214, 2011.

\bibitem[Simes(1986)]{simes1986improved}
Simes, R.~J.
\newblock An improved {B}onferroni procedure for multiple tests of
  significance.
\newblock \emph{Biometrika}, 73\penalty0 (3):\penalty0 751--754, 1986.

\bibitem[Solari et~al.(2014)Solari, Finos, and Goeman]{solari2014rotation}
Solari, A., Finos, L., and Goeman, J.~J.
\newblock Rotation-based multiple testing in the multivariate linear model.
\newblock \emph{Biometrics}, 70\penalty0 (4):\penalty0 954--961, 2014.

\bibitem[Tusher et~al.(2001)Tusher, Tibshirani, and
  Chu]{tusher2001significance}
Tusher, V.~G., Tibshirani, R., and Chu, G.
\newblock Significance analysis of microarrays applied to the ionizing
  radiation response.
\newblock \emph{Proceedings of the National Academy of Sciences}, 98\penalty0
  (9):\penalty0 5116--5121, 2001.

\bibitem[van~der Laan et~al.(2004)van~der Laan, Dudoit, and
  Pollard]{van2004multiple}
van~der~ Laan, M.~J., Dudoit, S., and Pollard, K.~S.
\newblock Multiple testing. {P}art {I}{I}{I}. procedures for control of the
  generalized family-wise error rate and proportion of false positives.
\newblock 2004.

\bibitem[Westfall and Young(1993)]{westfall1993resampling}
Westfall, P.~H. and Young, S.~S.
\newblock \emph{Resampling-based multiple testing: Examples and methods for
  p-value adjustment}, volume 279.
\newblock John Wiley \& Sons, 1993.

\end{thebibliography}

\newpage
\section*{Supplementary material: improved bounds by closed testing}
\label{SM}
\citet{goeman2011multiple} show how closed testing \citep{marcus1976closed} can be used to obtain simultaneous upper bounds for the FDP. 
As will be seen, this result is equivalent to that in \citet{genovese2006exceedance}.
By relating Theorem 1 in our paper to this method, we will  derive a uniform improvement of the envelope $B^{\mathrm{m}}$ of Theorem 1. 

For each nonempty $\I\subseteq \{1,...,m\}$, denote by $H_\I$ the intersection hypothesis $\bigcap_{i\in \I} H_i$. Suppose that for each nonempty $\I\subseteq \{1,...,m\}$ a test for $H_\I$ is defined and suppose $H_\N$ is rejected by its test with probability at most $\alpha$. These $2^m-1$ tests are called \emph{local tests}. The \emph{closed testing procedure} rejects all $H_\I$ for which all $H_\J$ with $\J\supseteq \I$ are rejected.

\citet{genovese2004stochastic, genovese2006exceedance} formulate the FDP  bounds as follows. 
We slightly generalize their setup, since we consider any level-$\alpha$ local tests. 
Let $\U$ be the set of $\B\subseteq\{1,...,m\}$ for which $H_\B$ is not rejected by its local test. For $\K\subseteq \{1,...,m\}$, \citet{genovese2006exceedance} consider the bound
\begin{equation} \label{vgw}
\overline{V}_{ct}(\K)=\max\{ \# \B\cap \K: \B\in \U \},
\end{equation}
where the maximum is defined to be zero if the set is empty.  
The following holds.

\begin{theorem} \label{gw}
Uniformly over all $\K\subseteq \{1,...,m\}$,  $\overline{V}_{\text{ct}}(\K)$ is a $(1-\alpha)$-upper bound for $\# \N\cap \K$, i.e.
$$\mathbb{P} \Bigg [\bigcap_{\K\subseteq \{1,...,m\}} \big \{ \# \N\cap \K \leq \overline{V}_{\text{ct}}(\K) \big \} \Bigg ] \geq 1-\alpha.$$
\end{theorem}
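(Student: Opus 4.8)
The plan is to exploit the single event on which all the asserted bounds hold simultaneously, namely the event that the local test of $H_\N$ fails to reject. First I would set $A$ to be this event; since by hypothesis the local test of $H_\N$ has level $\alpha$, we have $\mathbb{P}(A)\geq 1-\alpha$. Next I would observe that on $A$ the set $\N$ itself belongs to $\U$, directly from the definition of $\U$ as the collection of all $\B\subseteq\{1,\dots,m\}$ whose local test does not reject. Finally, for an arbitrary fixed $\K\subseteq\{1,\dots,m\}$, because $\N$ is one of the sets over which the maximum in \eqref{vgw} is taken, we get $\#\,\N\cap\K\leq\max\{\#\,\B\cap\K:\B\in\U\}=\overline{V}_{\text{ct}}(\K)$ on $A$. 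Since $\K$ was arbitrary, all these inequalities hold together on the same event $A$, and intersecting over $\K$ and using $\mathbb{P}(A)\geq 1-\alpha$ yields the displayed inequality.

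The argument is short, so I do not anticipate a genuine obstacle; the only points requiring a word of care are the degenerate cases. One should note that $\overline{V}_{\text{ct}}(\K)$ is well defined even when one worries about $\U$ being empty, since on $A$ we always have $\N\in\U$, so the maximum is taken over a nonempty set there; and the standing convention $\N\neq\emptyset$ makes the case $\N=\emptyset$ irrelevant (it would in any event make the statement trivial, as $\#\,\N\cap\K=0$). It is also worth remarking explicitly that although the bound $\overline{V}_{\text{ct}}$ is motivated by closed testing, its validity as proved here needs nothing about the closed testing procedure itself: only the level-$\alpha$ property of the \emph{single} local test of $H_\N$ is used, while the remaining $2^m-2$ local tests influence the power (the tightness of $\overline{V}_{\text{ct}}$) but not the coverage.

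Once Theorem \ref{gw} is in hand, I would then connect it to Theorem \ref{mmethod}: taking, for each nonempty $\I$, the local test of $H_\I$ that rejects when $w^{-1}\#\{1\leq j\leq w:\bigcap_{t\in\co}\{R_\I^j(t)\leq B(t)\}\}<1-\alpha$ for the relevant candidate $B\in\mathbb{B}$, one checks this is a level-$\alpha$ test of $H_\I$ by the same group-invariance computation as in the proof of Theorem \ref{mmethod}, and then specializes the sets $\K$ in $\overline{V}_{\text{ct}}$ to $\R(t)$ to recover a confidence envelope that uniformly improves $B^{\mathrm{m}}$. That reduction is the substantive content; Theorem \ref{gw} itself is the easy structural lemma that makes it go through.
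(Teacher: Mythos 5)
Your proof is correct and is essentially the paper's own argument: condition on the event that the local test of $H_\N$ does not reject (probability at least $1-\alpha$), note $\N\in\U$ on that event, and conclude $\#\,\N\cap\K\leq\overline{V}_{\text{ct}}(\K)$ simultaneously for all $\K$. The extra remarks on degenerate cases and on only the single local test of $H_\N$ mattering for coverage are accurate but do not change the route.
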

\begin{proof}
With probability at least $1-\alpha$, $H_\N$ is not rejected by its local test, and then $\#  \N\cap \K\leq \overline{V}_{\text{ct}}(\K)$ for all $\K\subseteq\{1,...,m\}$.
\end{proof}
\noindent Note that $\#\N \cap \K$ is the number of false positives if $\K$ is the rejected set. Thus the theorem provides  bounds for the numbers of false positives that are uniform over all possible rejected sets.

It turns out that the bounds $\overline{V}_{ct}(\K)$ are equal to the bounds constructed in \citet{goeman2011multiple}. They consider
$$\mathcal{C}:=\{\I\subseteq\{1,...,m\}: \quad H_\I \text{ is rejected by the closed testing procedure}\}.$$
For each  $\K\subseteq \{1,...,m\}$ they define the bound as
\begin{equation} \label{vct}
\max \{\#\I: \text{ } \I \subseteq {\K},  \text{ } \I\not\in \mathcal{C}\}, 
\end{equation}
Uniformly over all $\K\subseteq \{1,...,m\}$,  \eqref{vct} is a $(1-\alpha)$-upper bound for $\# \N\cap \K$. To prove this, note that with probability at least $1-\alpha$, $H_\N$ is not rejected by its local test, and then $\N \cap \K \not\in \mathcal{C}$ for all $\K\subseteq \{1,...,m\}$.

We now show that the bounds \eqref{vgw}  and  \eqref{vct} are equal, which has never been noted to our knowledge.

\begin{theorem} \label{gw2}
The bounds \eqref{vgw}  and  \eqref{vct} are equal for every $\K\subseteq\{1,...,m\}$.
\end{theorem}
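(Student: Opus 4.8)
The plan is to show the two maxima coincide by proving each side dominates the other. Write $a(\K)=\max\{\#\B\cap\K:\B\in\U\}$ for \eqref{vgw} and $b(\K)=\max\{\#\I:\I\subseteq\K,\ \I\notin\mathcal{C}\}$ for \eqref{vct}, where $\U$ is the collection of subsets not rejected by their local test and $\mathcal{C}$ is the collection of sets rejected by the closed testing procedure. Both maxima are over nonempty finite families (the empty set never lies in $\mathcal{C}$, and $\U\neq\emptyset$ with probability $1-\alpha$; in the degenerate case both are defined to be $0$), so attaining sets exist.

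First I would show $a(\K)\le b(\K)$. Let $\B\in\U$ attain $a(\K)$, and put $\I=\B\cap\K$. Since $\B$ is not rejected by its local test, by the definition of the closed testing procedure no superset of $\B$ can be certified, so in particular $\B$ itself — being its own superset — is not rejected by closed testing; hence any subset of $\B$ that needs $\B$ to be rejected fails to be in $\mathcal{C}$. More carefully: $\I\subseteq\B$, and $\I\in\mathcal{C}$ would require all $H_\J$ with $\J\supseteq\I$ to be rejected, including $H_\B$, contradicting $\B\in\U$. Thus $\I\notin\mathcal{C}$ and $\I\subseteq\K$, so $b(\K)\ge\#\I=a(\K)$.

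Conversely I would show $b(\K)\le a(\K)$. Let $\I\subseteq\K$ with $\I\notin\mathcal{C}$ attain $b(\K)$. Since $H_\I$ is not rejected by the closed testing procedure, there exists some $\J\supseteq\I$ with $H_\J$ not rejected by its local test, i.e. $\J\in\U$. Then $\#\B\cap\K\ge\#\J\cap\K\ge\#\I\cap\K=\#\I$ for $\B=\J$, since $\I\subseteq\J$ and $\I\subseteq\K$ give $\I\subseteq\J\cap\K$. Hence $a(\K)\ge\#\I=b(\K)$. Combining the two inequalities yields $a(\K)=b(\K)$ for every $\K\subseteq\{1,\dots,m\}$.

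The main obstacle is purely bookkeeping around the definition of rejection by closed testing: one must carefully invoke that ``$H_\I$ rejected by closed testing'' means ``$H_\J$ rejected by its local test for every $\J\supseteq\I$'', and its contrapositive ``$H_\I$ not rejected by closed testing'' means ``there exists $\J\supseteq\I$ with $\J\in\U$'' — the latter being exactly the witness needed for the second inequality. No analytic difficulty arises; the only care needed is handling the convention for empty maxima and making sure the chosen witness sets remain inside $\K$ (which is immediate since $\I\subseteq\K$ and we only take intersections with $\K$).
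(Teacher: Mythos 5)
Your proof is correct and follows essentially the same route as the paper's: the paper writes the argument as a chain of equalities, but its two key steps are exactly your two inequalities — that $\B\in\U$ forces $\B\cap\K\notin\mathcal{C}$, and that $\I\notin\mathcal{C}$ yields a witness $\J\supseteq\I$ with $\J\in\U$ so that $\#\J\cap\K\geq\#\I$. Your handling of the empty-max convention is a harmless extra precaution; no changes are needed.
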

\begin{proof}
We are done if we show that

\begin{align}
  &\max\{\#\B\cap \K: \B\in \mathcal{U}\}= \nonumber \\
   &\max\{\#\B\cap \K: \B\in \mathcal{U} \text{ and } \B\cap \K \not\in\mathcal{C}  \}= \label{1steq} \\
  &\max\{\#\B\cap \K: \B\subseteq \{1,...,m\} \text{ and } \B\cap \K \not\in\mathcal{C}  \}= \label{2ndeq} \\
  &\max\{\#\I: \I\subseteq \K \text{ and } \I \not\in\mathcal{C}  \}. \nonumber
\end{align}

The first and the last equality clearly hold. It is also clear that \eqref{1steq} $\leq$ \eqref{2ndeq}, so it is left to show that \eqref{2ndeq} $\leq$ \eqref{1steq}, which we now do. Note that if $\B\subseteq\{1,...,m\}$ and $\B\cap \K\not \in \mathcal{C}$, then there is a $\B'\in \mathcal{U}$ with $\B'\supseteq \B\cap \K$ and $\B'\cap \K \not\in \mathcal{C}.$ Then obviously 
$\# \B\cap \K \leq  \#\B'\cap \K  \leq $\eqref{1steq}. It follows that  \eqref{2ndeq} $\leq$ \eqref{1steq}.
\end{proof}

The equivalent formulations \eqref{vct} and \eqref{vgw}  are closely related, since in both cases the maximum is taken over all subsets of $\K$ that are not rejected by the closed testing procedure. 
Nevertheless the two formulations suggest different algorithms for computing the upper bound.
If a shortcut exists for the closed testing procedure, then an algorithm based on \eqref{vct} may be faster than one based on \eqref{vgw}.

As an example of a local test, consider the one which rejects $H_\I$ when 
\begin{equation} \label{eq:eventlc3}
\bigcup_{t\in \co} \big\{    R_\I(t)> B_{\I}(t)   \big\},
\end{equation}
where $B_\I$ is  defined in Section 3.1 of our paper. In particular, as noted there, $B_\I$ can be defined as 
\begin{equation} \label{eq:BInonp}
B_\I=\min\Bigg\{B\in  \mathbb{B}: \text{ } w^{-1}\#\Big\{1\leq j \leq w: \bigcap_{t\in \co}\big \{ R_\I^j(t) \leq  B(t) \big \} \Big\} \geq 1-\alpha \Bigg\}.
\end{equation}
Using these local tests in \eqref{vgw} we obtain simultaneous bounds $\overline{V}_{\text{ct}}(\K)$ for all $\K \subseteq \{1,...,m\}$.
Note that the function $B^{\text{ct}}: \co \rightarrow \{1,...,m\}$ given by $B^{\text{ct}}(t) = \overline{V}_{\text{ct}}(\R(t))$ is then a confidence envelope. 
It can be shown that $B^{\text{ct}}(t)\leq B_{\{1,...,m\}}(t)$ for all $t\in \co$, i.e. it is a uniform improvement. (This follows from \citet{goeman2011multiple}, equation (7).) 
If $B_{\I}$ is taken to be \eqref{eq:BInonp}, then $B_{\{1,...,m\}}$ coicides with the envelope $B^{\mathrm{m}}$ of Theorem 1 in our paper, so that  $B^{\text{ct}}$ is a uniform improvement of  $B^{\mathrm{m}}$.

In practice calculation of $\overline{V}_{\text{ct}}(\K)$ is computationally infeasible for large $m$, unless shortcuts are available. This is e.g. the case when the local tests are based on Simes' probability inequality \citep{goeman2016simultaneous}, i.e. when
$B_\I(t)=\#\{1\leq i \leq \#\I:i\alpha/\#\I\leq t\}.$
This parametric method is considered in Section 5 of our paper for comparison with our nonparametric methods.
When $B_{\I}$ is permutation-based, fast exact shortcuts for computing $\overline{V}_{\text{ct}}$  are often not available.

\end{document}